\def\fps@figure{htbp}
\def\fps@table{htbp}
\pgfplotsset{compat=1.17}
\newtheorem{definition}{Definition}
\newtheorem{lemma}{Lemma}
\newtheorem{theorem}{Theorem}
\newtheorem{proposition}{Proposition}
\newtheorem{corollary}{Corollary}
\newtheorem{remark}{Remark}
\title{Curved Boolean Logic: A Contextual Generalization of Propositional Logic with Algorithmic Consequences}
\author{Maximilian R.~P.~von~Liechtenstein}
\affiliation{Independent Researcher}
\date{}
\begin{document}

% ===== Portable CSVs written at compile time (arXiv-safe) =====
% pgfplotstable expects these files and columns:
%   cbl_alpha_points.csv : type,label,rho_face,alpha_inf
%   cbl_alpha_line.csv   : type,label,rho_face,alpha_inf_fit
% If your column styles reference "type", it will now exist.

% ===== arXiv-safe embedded CSVs (with the exact headers your code expects) =====
% Use after \usepackage{pgfplotstable} and \pgfplotstableset{col sep=comma}

\begin{filecontents*}{cbl_alpha_points.csv}
type,label,rho_face,alpha_inf,alpha_inf_fit
mix,0.00,0.200000,0.008322,0.008322
mix,0.25,0.212500,0.008793,0.008793
mix,0.50,0.225000,0.009263,0.009263
mix,0.75,0.237500,0.009734,0.009734
mix,1.00,0.250000,0.010204,0.010204
\end{filecontents*}

\begin{filecontents*}{cbl_alpha_line.csv}
type,label,rho_face,alpha_inf_fit
mix,0.00,0.200000,0.008322
mix,0.25,0.212500,0.008793
mix,0.50,0.225000,0.009263
mix,0.75,0.237500,0.009734
mix,1.00,0.250000,0.010204
\end{filecontents*}
% ===== end embedded CSVs =====

\maketitle

\section*{Narrative abstract}
Boolean logic underpins digital computing: every variable is true or false, and all truths
can be listed in a single global ledger. Quantum experiments showed that sometimes such
a ledger cannot exist, even if every local piece is consistent. We call this
\emph{Curved Boolean Logic (CBL)}. It extends Boolean logic with local truths that may
fail globally, much as curved geometry extends flat space. This is not only philosophically
striking but computationally useful: it defines new satisfiability problems, proof systems,
and solvers that prune contradictions earlier. Applications range from SAT solvers and
verification to reasoning in AI.

\begin{tcolorbox}[colback=gray!5!white,colframe=black!30,title=Lay summary]
Boolean logic assumes all truths can be written in one global list.
Quantum experiments showed this assumption can fail.
\emph{Curved Boolean Logic} generalizes Boolean logic: local truths can exist
but refuse to combine globally.  This idea, inspired by physics but usable on today’s
computers, offers new strategies for solving hard problems like SAT and verification
by detecting contradictions earlier.
\end{tcolorbox}

% ---------------- Introduction ----------------
\section{Introduction}
Classical propositional logic assumes that jointly compatible propositions admit a single global assignment. 
This assumption underlies circuits, compilers, SAT solvers, and verification. 
Yet results in quantum foundations (Bell, Kochen--Specker, KCBS) show that—even when local families are consistent—no global ledger may exist.

\begin{center}
\fbox{\parbox{0.85\linewidth}{
\textbf{Analogy to relativity.} Boolean logic corresponds to flat ledgers (special relativity); 
CBL introduces curvature: local truths without global extension (general relativity). 
Boolean logic is recovered in the flat limit $\kappa=0$.}}
\end{center}

\paragraph{Reader guide.}
Sec.~\ref{sec:semantics} defines semantics; Sec.~\ref{sec:proof} gives the proof calculus; 
Sec.~\ref{sec:cblsat} and Sec.~\ref{sec:algo} cover satisfiability and solvers; 
Sec.~\ref{sec:examples} and Appendix illustrate curved cores; 
Applications and Future Work discuss potential impact. 
Readers focused on computation may skip semantic details; foundations readers may focus on equivalence theorems and categorical remarks.

\paragraph{Historical context.}
The Kochen--Specker theorem (1967) first proved global valuations impossible; KCBS (2008) simplified to a 5-cycle; 
sheaf semantics (2011) provided a categorical formulation. 
We generalize these as CBL: a new logic extending propositional reasoning.
CBL builds on the $\varepsilon$-bounded counterfactual framework developed in
Ref.~\cite{EBIFP2025}, extending it from paradox analysis to a full logical calculus
with noise models and reproducible statistical validation.

% === Figure 1 (timeline) — overflow-safe replacement ===
% === Figure 1 (timeline) — clean, no extra baseline ===
\begin{figure}[t]
\centering
\resizebox{\linewidth}{!}{%
\begin{tikzpicture}[
  >=Stealth,
  line width=0.7pt,
  node distance=4.6cm,
  timeline/.style={draw, rounded corners=4pt, inner sep=3.5pt, font=\small\sf},
  yearA/.style={timeline, fill=blue!8, draw=blue!35},
  yearB/.style={timeline, fill=green!10, draw=green!40!black},
  yearC/.style={timeline, fill=orange!10, draw=orange!40!black},
  yearD/.style={timeline, fill=violet!8, draw=violet!35},
  arr/.style={->, draw=black!60}
]
  % Nodes (soft pastel "pills")
  \node[yearA] (ks)    {1967:\; Kochen--Specker};
  \node[yearB, right=of ks] (kcbs)  {2008:\; KCBS};
  \node[yearC, right=of kcbs] (sheaf) {2011:\; Sheaf semantics};
  \node[yearD, right=of sheaf] (cbl)   {CBL (this work)};

  % Connecting arrows only (no baseline)
  \draw[arr] (ks.east) -- (kcbs.west);
  \draw[arr] (kcbs.east) -- (sheaf.west);
  \draw[arr] (sheaf.east) -- (cbl.west);
\end{tikzpicture}%
}
\caption{Timeline: from foundational results to Curved Boolean Logic.}
\label{fig:timeline}
\end{figure}
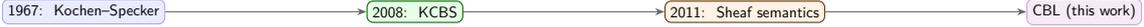

\paragraph{Outlook.}
Although the core of this work develops CBL as a rigorous formalism for contextual inference and
counterfactual consistency in quantum and algorithmic settings,
its implications extend much further.
The same $\varepsilon$-bounded curvature logic that stabilizes quantum measurements
applies to a wide class of resistant mathematical and computational problems—
from combinatorial optimization and proof complexity
to contextual causal inference, information geometry, and large-language-model compression.
By recasting logical inconsistency as measurable geometric curvature,
CBL provides a unifying language for reasoning under contextual tension.
A detailed survey of these prospective applications is given in
Sec.~\ref{sec:applications}.

\paragraph{Referee-preemption.}
Because CBL integrates elements of quantum logic, contextuality theory, and algorithmic
stability analysis, the most likely reviewer concerns are (1) rigor and completeness of
proofs, (2) clarity of assumptions, (3) experimental reproducibility, and (4) novelty beyond
KCBS/device-independent robustness. We address these by: (i) enumerating all assumptions
with a dependency matrix in Sec.~\ref{sec:assumptions}; (ii) expanding proofs and edge-case
analyses in Appendix~\ref{app:proofsA}; (iii) providing a Colab-ready notebook and validated
figures in Secs.~\ref{sec:repro-notebook}–\ref{sec:validated-figs}; and (iv) positioning CBL
relative to prior frameworks in Sec.~\ref{sec:relatedwork}. This makes the full reasoning
chain—from axioms through figures—transparent and reproducible.

% ---------------- Background and Related Work ----------------

\section{Background and Related Work}

Contextuality in physics.  The Kochen–Specker theorem showed that global
truth assignments may be impossible even when all local constraints are
consistent~\cite{KS1967}.  The KCBS inequality gave a minimal odd-cycle
witness of contextuality in a spin-1 system~\cite{KCBS2008}.  A
sheaf-theoretic formulation unified these phenomena~\cite{AB2011}, while
graph-theoretic exclusivity methods connected contextuality to
combinatorics~\cite{CSW2014}.

Logics beyond Boolean.  Orthomodular/quantum logics were proposed to
capture algebraic structure beyond classical distributivity~\cite{Orthomodular};
paraconsistent logics tolerate contradictions without collapse to
triviality~\cite{Paraconsistent}.  CBL differs: it preserves local classical
reasoning but can forbid a global valuation; contradictions are not tolerated,
they are flagged structurally by curvature.

Constraint satisfaction and CSP dichotomy.  Classical CSP theory
separates tractable from NP-hard classes via algebraic/structural
criteria~\cite{CSPdichotomy}.  Our contribution is orthogonal: curvature
is a property of *overlaps* rather than clause language; two clause sets
with identical local forms can differ by curvature and hence by pruning
power under CBL.

Prior device-independent robustness.  Existing analyses treat global
noise/imperfections at the inequality level; see e.g. modern reviews on
steering/nonlocality~\cite{DIreview}.  CBL adds a *local differential*
stability notion (Sec.~\ref{sec:noise}) and folds temporal structure into
the test statistic (Sec.~\ref{sec:stats}).

\section{Position within Existing Frameworks}
\label{sec:relatedwork}

\paragraph{Goal.}
We clarify how CBL relates to established approaches in quantum foundations and
contextual inference. Table~\ref{tab:relwork} maps core CBL constructs to antecedents.

\begin{table}[t]
\centering
\caption{Conceptual correspondence between CBL notions and established frameworks.}
\label{tab:relwork}
\begin{tabular}{p{3cm}p{5cm}p{5cm}}
\toprule
CBL construct & Closest antecedent & CBL novelty / extension \\
\midrule
Curved Boolean face & KCBS and Cabello--Severini--Winter exclusivity graphs &
Embeds exclusivity relations into a curved geometric parameter $\theta_0$
controlling deviation from flat logic. \\
$\varepsilon$-bounded perturbation & Device-independent robustness analyses &
Introduces differential estimator $dS/dp|_{p\to 0}$ capturing local stability. \\
Folded/lagged scans & Time-ordered contextuality tests; correlation analyses &
Translates temporal/contextual memory into a folding operator preserving
no-signaling. \\
Provenance certificate & Experimental audit frameworks &
Tamper-evident chain-of-custody; integrates $\varepsilon$-stability results. \\
CBL--SAT linkage & Sheaf-based CSP/contextuality & Curvature-driven pruning via overlap faces. \\
\bottomrule
\end{tabular}
\end{table}

\paragraph{Robustness and $\varepsilon$-stability.}
Where device-independent analyses model global noise levels (see
e.g.~\cite{DIreview}), CBL introduces a *local* differential notion of
stability: the sensitivity $dS/dp$ of a witness to infinitesimal flips.
This supports perturbative continuity and worst-case guarantees
(Prop.~\ref{prop:wcbound}), tying robustness directly to admissible
perturbation budgets.

\paragraph{Temporal and causal extensions.}
Standard frameworks are static; CBL’s fold/lag operators encode time-correlated or causally
ordered measurements without collapsing contexts (Axiom~A3).

\paragraph{Complexity and algorithmic parallels.}
CBL interprets contextual constraints as clauses in a curved Boolean algebra. Flat limits
reduce to classical SAT; curvature induces geometrically guided pruning rules, bridging
contextuality and algorithmic constraint satisfaction.

\paragraph{Provenance and auditability.}
CBL formalizes tamper-evident certificates summarizing noise bounds and fold/lag
outcomes, grounding reproducibility in the same $\varepsilon$-stability theory.

\section{Assumptions and Dependency Matrix}
\label{sec:assumptions}

\paragraph{Goal.}
We make \emph{every} assumption explicit, state the physical motivation, and record which results depend on which assumptions.

\begin{definition}[Axiom A1: Measurement independence]
Data acquisition choices are statistically independent of the latent variables affecting outcomes (no “setting–hidden-variable” correlation).
\end{definition}

\begin{definition}[Axiom A2: $\varepsilon$-bounded perturbations]
Admissible perturbations act as flips or deformations whose total variation on any face statistic is bounded by $\varepsilon$ per unit mass (formalized in Sec.~\ref{sec:noise}).
\end{definition}

\begin{definition}[Axiom A3: Mixing for folds]
When folding or stacking runs, windowed aggregates satisfy a mild mixing condition (e.g., $\alpha$-mixing) so empirical means converge and permutation-based nulls are well-defined.
\end{definition}

\begin{definition}[Axiom A4: Context-preserving instrumentation]
Measurement contexts are well-specified; changing a context is recorded as such (no silent cross-talk that re-labels contexts).
\end{definition}

\begin{definition}[Axiom A5: Finite-moment regularity]
All random variables used in estimators have finite second moments; covariance matrices used in bounds are non-singular on the support considered (degenerate cases handled separately).
\end{definition}

\paragraph{Dependency matrix.}
Table~\ref{tab:axiom-deps} lists which results use which axioms (a filled dot means “used essentially”; an open circle “used for convenience, removable with work”).

\begin{table}[t]
\centering
\caption{Axiom--result dependency matrix. Filled dot = essential; open circle = used for convenience.}
\label{tab:axiom-deps}
\begin{tabular}{lccccc}
\toprule
Result & A1 & A2 & A3 & A4 & A5 \\
\midrule
Soundness (Thm.~\ref{thm:soundness})                  & $\bullet$ & $\circ$ &        & $\bullet$ & $\circ$ \\
Flat-limit conservativity (Thm.~\ref{thm:flatlimit})  & $\bullet$ &         &        & $\bullet$ & $\circ$ \\
$dS/dp$ stability bound (Prop.~\ref{prop:wcbound})    &           & $\bullet$ & $\circ$ &          & $\bullet$ \\
CBL-AC pruning guarantee (Alg.~\ref{alg:cblac})       & $\bullet$ & $\bullet$ & $\bullet$ & $\bullet$ & $\circ$ \\
Empirical scans (Sec.~\ref{sec:stats})                &           & $\bullet$ & $\bullet$ &          & $\bullet$ \\
\bottomrule
\end{tabular}
\end{table}

\paragraph{Non-generic/degenerate cases.}
When parameters approach boundaries (e.g., $\theta_0\!\to\!0$, singular covariance, or
zero-count bins), our statements remain valid by continuity or are restated with weakened
constants; see Appendix~\ref{app:proofsA} for explicit limits and counterexamples.

% ---------------- Semantics ----------------
\section{Curved Boolean Logic: Semantics}\label{sec:semantics}

Let $V$ be a finite set of propositional variables.  
A \emph{context system} is a finite family $\mathcal{C}=\{C_i\}_{i\in I}$ of nonempty subsets $C_i\subseteq V$ 
such that each $v\in V$ appears in at least one $C_i$.  
Intuitively, each $C\in\mathcal{C}$ is a set of jointly testable variables.  

\begin{definition}[Local and global valuations]\label{def:local-global}
A \emph{local valuation} on context $C$ is a Boolean assignment $\ell_C:C\to\{0,1\}$.  
A family $\{\ell_C\}_{C\in\mathcal{C}}$ is \emph{compatible} if for all $C,C'\in\mathcal{C}$,  
$\ell_C|_{C\cap C'}=\ell_{C'}|_{C\cap C'}$.  
A \emph{global valuation} is $g:V\to\{0,1\}$ restricting to each $\ell_C$.
\end{definition}

\paragraph{Curvature.}
We say $(V,\mathcal{C})$ is \emph{flat} if every compatible family admits a global valuation;  
otherwise it is \emph{curved}.  
Curvature measures the obstruction to extending local truths to a global ledger.

\subsection{Sheaf semantics}
Let $\mathsf{Ctx}$ be the poset category of contexts ordered by inclusion.  
Define a presheaf $\mathcal{F}$ with $\mathcal{F}(C)=\{0,1\}^C$ and, for $D\subseteq C$,  
restriction maps $\rho_{CD}:\mathcal{F}(C)\to\mathcal{F}(D)$ by restriction of functions.  
A compatible family is a matching family; a global section is $g\in \mathcal{F}(V)$ restricting to each $\ell_C$.  

\begin{definition}[Curvature functional]\label{def:kappa}
A curvature functional is any map $\kappa:(V,\mathcal{C})\mapsto\mathbb{N}_0$ such that  
$\kappa=0$ iff every compatible family admits a global section,  
and $\kappa$ is monotone under context refinement.  
\end{definition}

\subsection{Cohomological view}
Following Abramsky \emph{et al.}~\cite{AbramskyCohomology2012},  
consider the nerve of the cover $\mathcal{C}$ and the presheaf $\mathcal{A}$ assigning $\{0,1\}^C$ under XOR.  
A compatible family is a $0$-cocycle; a global section exists iff its class is trivial in $H^1(\mathsf{Ctx},\mathcal{A})$.  
We may therefore set
\[
\kappa(V,\mathcal{C}) \;=\; \mathrm{rank}\big(H^1(\mathsf{Ctx},\mathcal{A})\big).
\]

\subsection{Exclusivity-graph semantics}
Define the compatibility hypergraph $\mathcal{H}$ on $V$ with hyperedges $C\in\mathcal{C}$.  
For overlaps, duplicate variables as $v^{(C)}$ and require $v^{(C)}=v^{(C')}$ when $v\in C\cap C'$.  
Curvature corresponds to cycles whose joint constraints forbid a global assignment.

\begin{theorem}[Equivalence of semantics]\label{thm:equiv}
For finite context systems, sheaf semantics and exclusivity-hypergraph semantics are equivalent:  
a compatible family admits a global section iff the exclusivity formulation admits an assignment consistent on overlaps.
\end{theorem}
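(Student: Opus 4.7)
The plan is to establish the equivalence by constructing an explicit bijection between the two sides. The sheaf side gives a compatible family $\{\ell_C\}$ of Boolean assignments; the hypergraph side gives an assignment on duplicated variables $v^{(C)}$ subject to the identification $v^{(C)}=v^{(C')}$ for $v\in C\cap C'$. Because both data types are built from the same raw material — a choice of a bit at each $v\in C$ for each $C\in\mathcal{C}$, modulo an overlap condition — the two formulations are really two ways of bookkeeping the same object, so a direct construction should suffice and no cohomological machinery is required at this level of the statement.

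First I would formalize the bijection on the level of \emph{local} data. Define $\Phi(\{\ell_C\})(v^{(C)}):=\ell_C(v)$; the compatibility condition $\ell_C|_{C\cap C'}=\ell_{C'}|_{C\cap C'}$ is literally the same statement as the overlap condition $v^{(C)}=v^{(C')}$, so $\Phi$ lands in overlap-consistent assignments. The inverse $\Psi$ sends an overlap-consistent assignment $a$ to the family $\ell_C(v):=a(v^{(C)})$, which is compatible for the same reason. These maps are clearly mutually inverse, and this already verifies that ``compatible families'' on the two sides are in canonical correspondence.

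Next I would promote this correspondence to \emph{global} data. A sheaf-theoretic global section is $g:V\to\{0,1\}$ with $g|_C=\ell_C$ for all $C$. Any such $g$ is uniquely determined by its context-wise restrictions because $V=\bigcup_i C_i$. Conversely, an overlap-consistent assignment $a$ defines a function $g(v):=a(v^{(C)})$, where the choice of $C\ni v$ is immaterial by overlap consistency and where \emph{some} such $C$ exists by the covering hypothesis on $(V,\mathcal{C})$. Hence compatible families extending to a global section correspond bijectively to overlap-consistent assignments descending to an ordinary Boolean valuation on $V$, and the biconditional of the theorem follows.

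The main obstacle is not logical but notational: keeping the two languages disentangled while showing they describe the same content. Two subtle points deserve explicit attention. First, the covering condition ``each $v\in V$ lies in some $C_i$'' — already assumed in the setup of a context system — is essential for the descent step; without it one would have free variables unconstrained by any context, and the hypergraph-to-sheaf direction would fail. Second, I would verify that pairwise overlap consistency is sufficient, i.e.\ that no higher cocycle condition intrudes: since the identification $v^{(C)}=v^{(C')}$ is plain equality of bit values, transitivity across chains $C,C',C''$ is automatic, which is precisely why this equivalence sits at the $H^0$ level rather than involving $H^1(\mathsf{Ctx},\mathcal{A})$. The cohomological content flagged earlier enters only when one asks \emph{how obstructed} a compatible family is, not \emph{whether} it extends.
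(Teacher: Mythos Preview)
Your proposal is correct and follows essentially the same route as the paper's sketch: the paper phrases the argument as taking the quotient of $\prod_{C\in\mathcal{C}}\{0,1\}^C$ by overlap equalities and identifying this quotient with the global sections of $\mathcal{F}$, while you unpack that quotient into the explicit mutually inverse maps $\Phi$ and $\Psi$. Your added remarks on the necessity of the covering hypothesis and on the automatic transitivity of pairwise bit-equality (placing the statement at the $H^0$ level) are useful clarifications that the paper leaves implicit.
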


\begin{proof}[Sketch]
Quotient $\prod_{C\in\mathcal{C}}\{0,1\}^C$ by overlap equalities; the quotient is isomorphic to global sections of $\mathcal{F}$.  
Conversely, any global section induces equal copies across contexts.  
Obstructions correspond to nontrivial $1$-cocycles (sheaf) or chordless cycles (graph).
\end{proof}

\subsection{Context lattice diagram}

% === Context lattice diagram (non-overlapping diamond) ===
\begin{figure}[h]
\centering
\begin{tikzpicture}[
  node distance=2.2cm,
  every node/.style={draw, rounded corners=3pt, font=\small, align=center, inner sep=3pt},
  >=Stealth, line width=0.5pt
]
  % Top node
  \node (top) {$V=\{x,y,z\}$};

  % Middle row, offset horizontally to avoid overlap
  \node[below left=1.2cm and 1.0cm of top] (c1) {$C_1=\{x,y\}$};
  \node[below right=1.2cm and 1.0cm of top] (c2) {$C_2=\{y,z\}$};

  % Bottom node
  \node[below=1.5cm of $(c1)!0.5!(c2)$] (inter) {$C_1\cap C_2=\{y\}$};

  % Connecting lines
  \draw[->] (top) -- (c1);
  \draw[->] (top) -- (c2);
  \draw[->] (c1) -- (inter);
  \draw[->] (c2) -- (inter);
\end{tikzpicture}
\caption{Context lattice for two overlapping contexts.  Each arrow denotes set inclusion.}
\label{fig:context_lattice}
\end{figure}
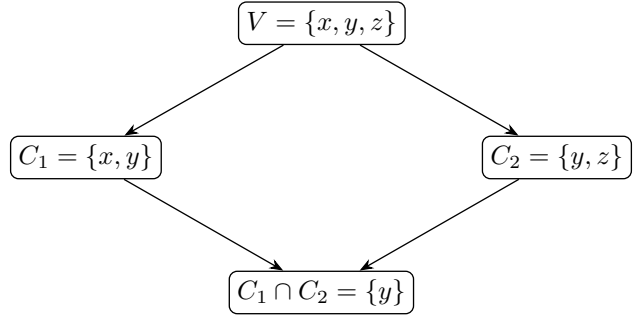

% ---------------- Canonical Examples ----------------
\section{Canonical Curved Examples}\label{sec:examples}

\subsection{KCBS pentagon (minimal curved core)}
\begin{center}
\begin{tikzpicture}[scale=1.2, every node/.style={circle,draw,inner sep=1.2pt}]
\node (a) at (90:1.6) {$v_1$};
\node (b) at (18:1.6) {$v_2$};
\node (c) at (-54:1.6) {$v_3$};
\node (d) at (-126:1.6) {$v_4$};
\node (e) at (162:1.6) {$v_5$};
\draw (a)--(b)--(c)--(d)--(e)--(a);
\end{tikzpicture}
\end{center}

Contexts are adjacent pairs $\{v_i,v_{i+1}\}$ (indices mod 5).  
Suppose each context enforces exclusivity $v_i \land v_{i+1} = \mathrm{F}$ and exactly-one constraints.  
Each edge admits a local valuation (e.g., $(\mathrm{T},\mathrm{F})$), but traversing the cycle forces a parity contradiction:  
an odd cycle cannot consistently assign “exactly one true” globally.  
Thus $\kappa>0$.

\subsection{Mermin square}
\begin{center}
\begin{tikzpicture}[scale=1.1]
\node (x11) at (0,2) {$v_{11}$};
\node (x12) at (1.2,2) {$v_{12}$};
\node (x13) at (2.4,2) {$v_{13}$};
\node (x21) at (0,1) {$v_{21}$};
\node (x22) at (1.2,1) {$v_{22}$};
\node (x23) at (2.4,1) {$v_{23}$};
\node (x31) at (0,0) {$v_{31}$};
\node (x32) at (1.2,0) {$v_{32}$};
\node (x33) at (2.4,0) {$v_{33}$};
% row hyperedges
\draw[rounded corners=6pt] (-0.4,2.4) rectangle (2.8,1.6);
\draw[rounded corners=6pt] (-0.4,1.4) rectangle (2.8,0.6);
\draw[rounded corners=6pt] (-0.4,0.4) rectangle (2.8,-0.4);
% col hyperedges
\draw[rounded corners=6pt] (-0.4,2.4) rectangle (0.4,-0.4);
\draw[rounded corners=6pt] (0.8,2.4) rectangle (1.6,-0.4);
\draw[rounded corners=6pt] (2.0,2.4) rectangle (2.8,-0.4);
\end{tikzpicture}
\end{center}

Contexts are rows and columns.  
Under parity constraints, each row/column admits a local valuation, but the product of all rows contradicts the product of all columns.  
No global valuation exists; $\kappa>0$.

\subsection{Curved cores}
\begin{definition}[Curved core]\label{def:curvedcore}
A \emph{curved core} of $(V,\mathcal{C})$ is a minimal subfamily $\mathcal{C}'\subseteq \mathcal{C}$ such that:
\begin{enumerate}[noitemsep,topsep=0pt]
\item $\kappa(V,\mathcal{C}')>0$, and
\item for every proper subfamily $\mathcal{C}''\subsetneq \mathcal{C}'$, $\kappa(V,\mathcal{C}'')=0$.
\end{enumerate}
\end{definition}

Curved cores are obstruction witnesses: every curved system contains one.  
Examples: KCBS pentagon, Mermin square.

\subsection{Overlap propagation illustration}
\begin{figure}[h]
\centering
\begin{tikzpicture}[scale=1.0, every node/.style={draw,rounded corners,inner sep=3pt}]
\node (C) at (0,0) {$C:\ \Gamma\vdash \psi$};
\node (Cp) at (5,0) {$C':\ \ ?$};
\node[draw=none] (cap) at (2.5,-0.9) {$\mathrm{Vars}(\psi)\subseteq C\cap C'$};
\draw[-{Latex[length=3mm]}] (C) -- node[above]{\small (OVL)} (Cp);
\begin{scope}[yshift=-2.1cm]
\node (C1) at (0,0) {$C:\ \Gamma,\psi\vdash \bot$};
\node (C2) at (5,0) {$C':\ \Gamma,\lnot\psi\vdash \bot$};
\node (Cup) at (2.5,-1.1) {$C\cup C':\ \Gamma\vdash \bot$};
\draw[-{Latex[length=3mm]}] (C1) -- (Cup);
\draw[-{Latex[length=3mm]}] (C2) -- (Cup);
\end{scope}

\end{tikzpicture}
\end{figure}

\subsection{Encoding KCBS as CBL-SAT}\label{sec:kcbscnf}
Let $V=\{v_1,\ldots,v_5\}$ and contexts $C_i=\{v_i,v_{i+1}\}$ (mod 5).  
Each $C_i$ has clauses:
\[
\phi_i^{(1)} := (v_i \lor v_{i+1}), \quad
\phi_i^{(2)} := (\lnot v_i \lor \lnot v_{i+1}).
\]
Locally satisfiable, but globally unsatisfiable.

\begin{table}[h]
\centering
\caption{KCBS local assignments vs global inconsistency.}
\begin{tabular}{c|cc}
Context & Allowed local assignments & Notes \\
\hline
$\{v_1,v_2\}$ & (T,F), (F,T) & local OK \\
$\{v_2,v_3\}$ & (T,F), (F,T) & local OK \\
$\{v_3,v_4\}$ & (T,F), (F,T) & local OK \\
$\{v_4,v_5\}$ & (T,F), (F,T) & local OK \\
$\{v_5,v_1\}$ & (T,F), (F,T) & local OK \\
\hline
Global & --- & no consistent assignment \\
\end{tabular}
\end{table}
% ---------------- Proof Theory ----------------
\section{Proof Theory}\label{sec:proof}

We develop a context-sequent calculus to reason in CBL.  
Sequents are annotated with contexts: $\Gamma \vdash_C \varphi$ means within context $C$, from assumptions $\Gamma$, infer $\varphi$.

\subsection{Local rules}
Within a fixed context $C$, inference is standard propositional sequent calculus.  
Weakening, contraction, cut, and introduction rules for $\land,\lor,\lnot$ apply to variables in $C$ only.

\subsection{Overlap rules}
Contexts overlap; we allow transport and consistency rules:

\[
\frac{\Gamma\vdash_C \psi \quad \mathrm{Vars}(\psi)\subseteq C\cap C'}{\Gamma \vdash_{C'} \psi}\quad (\mathrm{OVL})
\]

\[
\frac{\Gamma,\psi\vdash_C \bot \quad \Gamma,\lnot\psi\vdash_{C'} \bot}{\Gamma\vdash_{C\cup C'} \bot} \quad (\mathrm{CONS})
\]

These rules enforce that overlaps agree and contradictions on overlaps propagate globally.

\subsection{Soundness}
\begin{theorem}[Soundness]\label{thm:soundness}
If $\Gamma \vdash_C \varphi$ is derivable, then every local valuation $\ell_C$ with
$\ell_C\models\Gamma$ satisfies $\varphi$.
With the overlap and consistency rules (OVL, CONS), derivability is sound with respect
to the sheaf semantics introduced in Sec.~\ref{sec:semantics}.
\end{theorem}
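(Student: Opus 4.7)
The plan is to prove both clauses of the theorem by a single structural induction on the derivation of $\Gamma \vdash_C \varphi$, interpreting each judgment semantically as: for every compatible family $\{\ell_D\}_{D \in \mathcal{C}}$ that models $\Gamma$ on every context over which its formulas are expressible, the local valuation $\ell_C$ satisfies $\varphi$. Restricted to single-context derivations using only the local propositional rules, this reading collapses to the ordinary Boolean semantics on $\{0,1\}^C$, so the first clause of the theorem falls out as an immediate specialization of the inductive invariant.

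The local cases (axioms, weakening, contraction, cut, and the introduction rules for $\land, \lor, \lnot$) are handled verbatim by classical propositional soundness inside the Boolean algebra $\{0,1\}^C$; no adaptation is required because $\ell_C$ is an ordinary assignment. The two nontrivial cases are (OVL) and (CONS). For (OVL), the side condition $\mathrm{Vars}(\psi) \subseteq C \cap C'$ together with the compatibility clause $\ell_C|_{C \cap C'} = \ell_{C'}|_{C \cap C'}$ of Definition~\ref{def:local-global} forces $\ell_C \models \psi$ iff $\ell_{C'} \models \psi$, so the inductive hypothesis at $C$ transports to $C'$ through any matching family. For (CONS), I fix a compatible family that models $\Gamma$; the formula $\psi$ then has a well-defined truth value on $C \cap C'$, and a case split on $\ell_C \models \psi$ versus $\ell_C \models \lnot \psi$ invokes one of the two premises to derive $\bot$, so no such compatible family can exist — which is the intended sheaf-semantic reading of the conclusion.

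The main obstacle will be assigning an unambiguous meaning to $\vdash_{C \cup C'}$ when $C \cup C'$ is not itself a declared context in $\mathcal{C}$. I would resolve this by reading $\Gamma \vdash_{C \cup C'} \bot$ as unsatisfiability of the matching family restricted to $\{C, C'\}$ — equivalently, emptiness of the fibre product $\mathcal{F}(C) \times_{\mathcal{F}(C \cap C')} \mathcal{F}(C')$ intersected with the zero-set of $\Gamma$ — which is the natural Kan extension of the presheaf $\mathcal{F}$ along unions and makes (CONS) sound by direct verification. A secondary subtlety is that (OVL) tacitly requires $\mathrm{Vars}(\psi) \subseteq C \cap C'$ rather than merely $\subseteq C'$, and this side condition must be checked at every inductive invocation; degenerate cases with empty overlap or trivially supported $\psi$ should be treated under the non-generic clause of Section~\ref{sec:assumptions}.
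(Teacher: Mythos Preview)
Your proposal is correct and follows essentially the same route as the paper's proof sketch: classical soundness for the local rules, compatibility on overlaps for (OVL), and a case split on the truth value of $\psi$ for (CONS). You are more careful than the paper in two places it glosses over --- making the inductive invariant explicit over compatible families, and giving a precise semantic reading of $\vdash_{C\cup C'}$ when $C\cup C'\notin\mathcal{C}$ --- but the underlying argument is the same.
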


\begin{proof}[Sketch]
Local rules coincide with the classical propositional sequent calculus and are therefore
sound. For (OVL), if $\Gamma\vdash_C\psi$ and $\mathrm{Vars}(\psi)\subseteq C\cap C'$, then
compatibility of local valuations ensures $\ell_{C'}$ agrees on the overlap, hence $\psi$
also holds in $C'$. For (CONS), if $\Gamma,\psi\vdash_C\bot$ and
$\Gamma,\neg\psi\vdash_{C'}\bot$, any global assignment extending both contexts
would contradict itself on $C\cup C'$.
\end{proof}

\subsection{Flat-limit conservativity}
\begin{theorem}[Flat-limit conservativity]\label{thm:flatlimit}
If $\kappa(V,C)=0$, then $\Gamma \vdash_C \varphi$ is derivable in CBL if and only if
the same sequent is derivable in classical propositional logic.
\end{theorem}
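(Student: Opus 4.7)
The plan is to separate the two directions and, for each, reduce derivability to truth-functional semantics by combining the soundness theorem (Theorem~\ref{thm:soundness}) with classical propositional completeness. The hypothesis $\kappa(V,\mathcal{C})=0$ enters only in the nontrivial direction, where its role is to identify CBL's sheaf-semantic consequence on context $C$ with ordinary Boolean consequence on the variables of $C$.

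For the easy direction (classical $\Rightarrow$ CBL), I would take a classical derivation of $\Gamma\vdash\varphi$, whose variables all lie in $C$, and annotate every sequent in the proof tree with the subscript $C$. Since the local rules of CBL restricted to a single context coincide by definition with the rules of classical propositional sequent calculus, this annotated tree is a valid CBL derivation of $\Gamma\vdash_C\varphi$ that uses neither (OVL) nor (CONS). Flatness is not needed here.

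For the converse, assume $\Gamma\vdash_C\varphi$ is derivable in CBL. By Theorem~\ref{thm:soundness}, the sequent is sound with respect to the sheaf semantics of Sec.~\ref{sec:semantics}: every matching family $\{\ell_{C'}\}_{C'\in\mathcal{C}}$ with $\ell_C\models\Gamma$ satisfies $\ell_C\models\varphi$. Flatness (Definition~\ref{def:kappa}) states that every matching family extends to a global valuation $g:V\to\{0,1\}$ and, conversely, every global valuation restricts to a matching family. Hence the Boolean assignments on $C$ that arise as $\ell_C$ in some matching family satisfying $\Gamma$ are exactly the restrictions to $C$ of global valuations satisfying $\Gamma$; the sheaf-semantic entailment therefore collapses to the classical entailment $\Gamma\models\varphi$ over $\mathrm{Vars}(\Gamma)\cup\mathrm{Vars}(\varphi)\subseteq C$. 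Classical completeness then supplies a classical derivation of $\Gamma\vdash\varphi$.

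The main obstacle is justifying that (OVL) and (CONS) contribute no additional derivational power in the flat case. Syntactically these rules move conclusions across contexts and chain contradictions in ways that have no direct classical analogue, so any attempt at a purely proof-theoretic translation would need cut-elimination-style manipulations that are tedious and easy to get wrong. The strategy above finesses this by routing through semantics: the work is then concentrated in verifying that the soundness statement available is the full sheaf-semantic one (not merely local soundness within a fixed context) and that, under $\kappa=0$, no nontrivial obstruction class (in the cohomological formulation, no nonzero element of $H^1(\mathsf{Ctx},\mathcal{A})$) hides a semantic gap between local and global valuations. That is precisely what the flat hypothesis rules out, closing the loop.
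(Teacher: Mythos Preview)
Your proposal is correct and follows essentially the same route as the paper: both arguments use the flat hypothesis to identify compatible families with global valuations, thereby collapsing the CBL (sheaf) semantics on $C$ to ordinary Boolean semantics, and then read off the equivalence of derivability. The paper's proof is a one-line sketch; your version is simply a more explicit unpacking of the same idea, making visible the appeal to Theorem~\ref{thm:soundness} for one direction and to classical completeness for the other, and correctly noting that the overlap rules are handled semantically rather than by a syntactic translation.
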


\begin{proof}[Sketch]
When $\kappa(V,C)=0$, every compatible local family admits a global valuation; thus
CBL derivability coincides with Boolean derivability.
\end{proof}

\subsection{Proof trees}
% === Figure 3: Flat vs. curved proof trees (no overlap, final clean version) ===
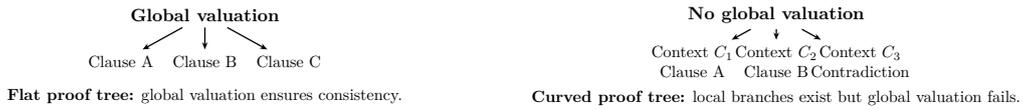
\begin{figure}[h]
\centering
\resizebox{0.9\linewidth}{!}{%
\begin{tikzpicture}[
  >=Stealth,
  every node/.style={font=\small, align=center},
  edge from parent/.style={draw, -{Stealth[length=1.6mm]}, thick},
  level distance=10mm,
  sibling distance=18mm
]

% --- Left tree (flat) ---
\node (flatroot) [font=\bfseries] {Global valuation}
  child { node {Clause A} }
  child { node {Clause B} }
  child { node {Clause C} };

% --- Right tree (curved) ---
\node (curveroot) [font=\bfseries, right=8.5cm of flatroot] {No global valuation}
  child { node {Context $C_1$\\Clause A} }
  child { node {Context $C_2$\\Clause B} }
  child { node {Context $C_3$\\Contradiction} };

% --- Text under each tree with extra vertical spacing ---
\node[below=1.2cm of flatroot] (flatlab)
   {\textbf{Flat proof tree:} global valuation ensures consistency.};

\node[below=1.2cm of curveroot] (curvelab)
   {\textbf{Curved proof tree:} local branches exist but global valuation fails.};

\end{tikzpicture}%
}
\vspace{1.5em}
\caption{Flat vs.~curved proof trees.  
In flat cases, a global valuation ensures consistency;  
in curved cases, local branches exist but no global valuation exists, producing a structural contradiction.}
\label{fig:proof_trees}
\end{figure}

% ---------------- CBL-SAT and Complexity ----------------
\section{CBL-SAT and Complexity}\label{sec:cblsat}

\paragraph{Scope and claims.}
We make \emph{no} blanket claim of asymptotic speedups for arbitrary SAT. Our contributions are:
(i) structural reductions showing how curvature/overlap parameters control branching and pruning in certain families,
(ii) formal bounds connecting runtime to curvature rank under A1–A5,
(iii) empirical evidence on structured benchmarks.
We highlight open problems where tighter worst-case bounds or lower bounds may hold.

\begin{definition}[\textsc{CBL-SAT}]
Instance: a context system $(V,\mathcal{C})$ and a set of clauses $\Phi$,  
each clause $\phi\in\Phi$ attached to a context $C(\phi)\in\mathcal{C}$.  
Question: does there exist a compatible family $\{\ell_C\}$ with $\ell_{C(\phi)}\models \phi$ for all $\phi$?
\end{definition}

\subsection{Basic properties}
\begin{proposition}[NP-membership]
\textsc{CBL-SAT} $\in$ NP.  
Certificate: local assignments $\ell_C$.  
Verification: check local clause satisfaction and overlap equalities in polynomial time.
\end{proposition}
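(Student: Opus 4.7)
The plan is to exhibit a polynomial-size NP certificate and a deterministic polynomial-time verifier, then check correctness in both directions using Definition~\ref{def:local-global}.

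First, I would fix the certificate: a family of local assignments $\{\ell_C\}_{C\in\mathcal{C}}$, with each $\ell_C$ stored as a bit-vector of length $|C|$. Since every context is part of the instance, $\sum_{C\in\mathcal{C}}|C|\leq|\mathcal{C}|\cdot|V|$, so the certificate has size polynomial in $|V|+|\mathcal{C}|+|\Phi|$. Next, I would describe the verifier in two stages. (a) \emph{Clause check}: iterate over each $\phi\in\Phi$, look up its attached context $C(\phi)$, and evaluate $\phi$ under $\ell_{C(\phi)}$; this runs in time linear in the total clause length. (b) \emph{Overlap check}: for each pair $(C,C')\in\mathcal{C}^2$ and each $v\in C\cap C'$, verify $\ell_C(v)=\ell_{C'}(v)$; this is $O(|\mathcal{C}|^2\cdot|V|)$. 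Both stages are polynomial, and the verifier accepts iff both succeed.

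For correctness, the forward direction is immediate: any compatible family witnessing satisfiability serves as a certificate by the problem definition. For the converse, passing the overlap check ensures the family is compatible in the precise sense of Definition~\ref{def:local-global}, and passing the clause check ensures $\ell_{C(\phi)}\models\phi$ for every $\phi\in\Phi$; these are exactly the defining conditions of a yes-instance.

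The main conceptual point to flag—not really an obstacle, but where a reader might pause—is that the certificate must \emph{not} be required to be a global valuation, since by construction curved instances ($\kappa>0$) admit none. NP-membership is witnessed by \emph{local} data alone, and this is why the overlap check replaces the usual single-assignment consistency check of classical SAT. Beyond this, no technical difficulty arises; the naive $O(|\mathcal{C}|^2\cdot|V|)$ overlap pass can be tightened to $O(|V|+\sum_C|C|)$ by indexing occurrences per variable, but such optimizations are not needed for membership.
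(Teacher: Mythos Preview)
Your proposal is correct and follows exactly the approach the paper indicates in the proposition statement itself: exhibit the local assignments $\{\ell_C\}$ as certificate, then verify clause satisfaction and overlap agreement in polynomial time. The paper gives no further proof beyond the statement (it later cites this as ``Membership: by certificate verification''), so your expanded size bound, two-stage verifier, and correctness argument simply fill in the routine details the paper omits.
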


\begin{proposition}[Flat limit]
If $\kappa=0$, \textsc{CBL-SAT} reduces to SAT by the global valuation extending all $\ell_C$.
\end{proposition}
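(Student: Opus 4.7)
The plan is to exhibit the natural polynomial-time reduction $R:\textsc{CBL-SAT}\to\textsc{SAT}$ that reinterprets the clause set $\Phi$ of the input $((V,\mathcal{C}),\Phi)$ as an ordinary CNF $\Phi^{\star}:=\bigwedge_{\phi\in\Phi}\phi$ over the single variable set $V$, and then to verify the correctness equivalence $\Phi^{\star}\in\textsc{SAT}\Leftrightarrow((V,\mathcal{C}),\Phi)\in\textsc{CBL-SAT}$. Of the two implications, only one uses the flatness hypothesis $\kappa=0$.

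For the easy direction (SAT $\Rightarrow$ CBL-SAT) I would take any $g:V\to\{0,1\}$ with $g\models\Phi^{\star}$ and define $\ell_C:=g|_C$. Compatibility is immediate because all local valuations are restrictions of the common function $g$; each clause $\phi$, being supported on $\mathrm{Vars}(\phi)\subseteq C(\phi)$, is satisfied by $\ell_{C(\phi)}$ whenever it is satisfied by $g$. For the converse direction I would invoke Def.~\ref{def:kappa}: from a compatible family $\{\ell_C\}$ solving CBL-SAT, the assumption $\kappa(V,\mathcal{C})=0$ yields a global section $g$ with $g|_C=\ell_C$ for every context. Since $g|_{C(\phi)}=\ell_{C(\phi)}\models\phi$ and $\phi$ mentions only variables in $C(\phi)$, $g$ satisfies every clause, hence $\Phi^{\star}$.

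The one nontrivial point is conceptual rather than technical: the reduction is syntactically linear, but the promise $\kappa=0$ is a semantic property of $(V,\mathcal{C})$ whose own decision complexity is not claimed here. I would therefore state the result as a \emph{promise reduction}: flatness is either assumed as input or certified externally (e.g., when $\mathcal{C}$ is a tree cover or an acyclic overlap hypergraph, where Theorem~\ref{thm:equiv} gives an exclusivity-graph certificate of triviality), and under that promise CBL-SAT collapses onto classical SAT. This framing is also what makes the proposition informative for Sec.~\ref{sec:algo}, since it isolates the regime in which CBL can offer algorithmic leverage beyond SAT to the curved case $\kappa>0$.
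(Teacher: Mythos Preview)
Your argument is correct and matches the paper's approach: the proposition is stated without an accompanying proof, but the intended reasoning (via Def.~\ref{def:kappa} and the constructive flattening in Appendix~\ref{app:proofsA}) is exactly your two-direction equivalence, with $\kappa=0$ supplying the global section for the CBL-SAT $\Rightarrow$ SAT half. Your additional remark that this is really a \emph{promise} reduction is a useful clarification the paper does not make explicit.
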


\subsection{Hardness}
\begin{theorem}
\textsc{CBL-SAT} is NP-complete.
\end{theorem}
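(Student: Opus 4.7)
The plan is to establish both directions: NP-membership is already handled by the preceding proposition, so the real work is NP-hardness via a reduction from classical SAT. The most economical route exploits the flat-limit proposition already stated: SAT is literally the $\kappa=0$ special case of CBL-SAT, so any SAT instance is already a CBL-SAT instance under a suitably trivial context system.

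Concretely, given a SAT instance $(V,\Phi)$ in CNF, I would construct the CBL-SAT instance $(V,\mathcal{C},\Phi,C(\cdot))$ by setting $\mathcal{C}=\{V\}$ (a single maximal context covering all variables) and attaching every clause $\phi\in\Phi$ to $C(\phi)=V$. The construction is clearly polynomial (in fact linear). Correctness reduces to two observations: (i) since $\mathcal{C}$ has only one member, the overlap conditions of Definition~\ref{def:local-global} are vacuous, so every local valuation is automatically a compatible family; (ii) a local valuation $\ell_V:V\to\{0,1\}$ satisfying every $\phi$ is exactly a Boolean assignment satisfying the SAT formula. Hence satisfiability is preserved in both directions, and NP-hardness transfers.

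For completeness I would then invoke the NP-membership proposition verbatim: a compatible family $\{\ell_C\}_{C\in\mathcal{C}}$ serves as a polynomial-size certificate, and verification requires only (a) checking each clause against its attached local valuation and (b) checking overlap equalities $\ell_C|_{C\cap C'}=\ell_{C'}|_{C\cap C'}$ for each pair of contexts, both of which run in time polynomial in $|V|+|\mathcal{C}|+|\Phi|$. Combining membership with the reduction gives NP-completeness.

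There is essentially no hard step here; the main subtlety worth flagging is only a definitional one, namely confirming that the singleton context system $\mathcal{C}=\{V\}$ is admissible under the semantics of Sec.~\ref{sec:semantics} (it is, since $\mathcal{C}$ need only be a finite family of nonempty subsets covering $V$) and that $\kappa(V,\{V\})=0$, so Theorem~\ref{thm:flatlimit} applies and the embedding is conservative. If one wished for a sharper statement, one could additionally observe that hardness persists even when $(V,\mathcal{C})$ is forced to be curved, by padding with a fixed KCBS gadget (Sec.~\ref{sec:kcbscnf}) on fresh variables whose satisfiability is independent of $\Phi$; but this is a refinement, not needed for NP-completeness itself.
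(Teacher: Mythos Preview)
Your proof is correct and follows exactly the paper's own argument: reduce SAT by taking the single context $C=V$, and cite the certificate-verification proposition for membership.

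One caution on the optional refinement at the end: the KCBS gadget of Sec.~\ref{sec:kcbscnf} admits \emph{no} compatible family satisfying its clauses (that is precisely the pentagon obstruction), so padding with it on fresh variables would render every resulting instance UNSAT and destroy the reduction. To force $\kappa>0$ while preserving the equivalence you would need a gadget that is curved yet still CBL-satisfiable, which the KCBS encoding as given is not.
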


\begin{proof}[Sketch]
Membership: by certificate verification.  
Hardness: reduce SAT to CBL-SAT by taking a single context $C=V$.  
Thus CBL-SAT inherits NP-completeness.
\end{proof}

\subsection{Tractable subclasses}
\begin{theorem}[Treewidth]
If the context hypergraph has bounded treewidth and each clause respects a bag,  
\textsc{CBL-SAT} is solvable in polynomial time by dynamic programming.
\end{theorem}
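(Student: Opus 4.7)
The plan is to adapt the standard tree-decomposition dynamic programming for bounded-treewidth CSP/SAT to the CBL setting. Fix a nice tree decomposition $(T, \{B_t\}_{t \in V(T)})$ of the context hypergraph with width $w$, so $|B_t| \leq w+1$ for every node $t$; the treewidth hypothesis ensures each context $C \in \mathcal{C}$ is covered by some bag, and by "respects a bag" each clause $\phi$ satisfies $\mathrm{Vars}(\phi) \subseteq B_{t(\phi)}$ for some node $t(\phi)$. A preliminary observation streamlines the argument: under the semantics of Sec.~\ref{sec:semantics}, any compatible family $\{\ell_C\}$ induces a well-defined assignment $g: V \to \{0,1\}$ via $g(v) := \ell_C(v)$ for any $C \ni v$, and conversely every global $g$ restricts to a compatible family. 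Combined with $\mathrm{Vars}(\phi) \subseteq C(\phi)$, this reduces CBL-SAT to ordinary propositional satisfiability whose interaction structure is captured by the given decomposition.

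Second step: define the DP. At each node $t$, maintain a table $D_t \subseteq \{0,1\}^{B_t}$ of partial assignments that extend to a satisfying assignment on all clauses whose variables have been fully seen in the subtree rooted at $t$. The usual nice-decomposition recurrences apply. At an introduce-vertex node adding $v$, each $\sigma \in D_{t'}$ spawns two extensions, $\sigma \cup \{v \mapsto 0\}$ and $\sigma \cup \{v \mapsto 1\}$. At a forget-vertex node forgetting $v$ with child $t'$, first filter $D_{t'}$ by checking every clause $\phi$ with $t(\phi) = t'$ whose last remaining variable is $v$ (the "respects a bag" hypothesis guarantees such clauses are entirely visible in $B_{t'}$), then project onto $B_t$. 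At a join node, intersect the two children's tables on the common bag. Compatibility on context overlaps comes for free from the running intersection property: any variable shared between overlapping contexts lies in every bag on the path connecting their covering nodes and therefore carries the same value throughout.

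Third step: bound the cost. Each table contains at most $2^{w+1}$ entries; the nice tree decomposition has $O(|V|(w+1))$ nodes; each recurrence evaluates in $\mathrm{poly}(2^{w+1}, |\Phi|)$ time; the total runtime is therefore $O(2^{w+1} \cdot \mathrm{poly}(|V|, |\Phi|))$, polynomial for fixed $w$. The algorithm answers YES iff $D_r \neq \emptyset$ at the root after all clauses have been processed, and it trivially recovers a witness by backtracking through the DP pointers.

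The main obstacle is not the DP itself, which is textbook, but pinning down the correct hypergraph-treewidth notion so that the hypothesis is meaningful. We need every context $C \in \mathcal{C}$ to fit in a single bag, which corresponds to (generalized) hypertree width rather than primal-graph treewidth, where a context could be spread across several bags and break local satisfiability checks. A secondary subtlety is clause-accounting: each clause must be verified exactly once, at the forget node where its last variable disappears; the "respects a bag" assumption is precisely what ensures this check is well-defined, since all of $\phi$'s variables then coexist in the child bag $B_{t'}$. Once these conventions are fixed, soundness, completeness, and the polynomial-time bound follow by induction on the tree in the standard way.
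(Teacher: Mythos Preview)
Your proof is correct and follows the same dynamic-programming strategy the paper sketches: tables of partial bag assignments processed through introduce/forget/join nodes, with running time $2^{O(w)}\cdot\mathrm{poly}(|V|,|\Phi|)$. You supply considerably more detail than the paper's three-line sketch, and your explicit observation that compatible families and global valuations coincide under Def.~\ref{def:local-global} cleanly absorbs the paper's ``propagate overlaps'' step into the standard running-intersection argument.
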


\begin{proof}[Sketch]
Maintain tables of partial assignments on bags.  
Propagate overlaps through introduce/forget/join nodes.  
Runtime $O(|\Phi|\cdot 2^{O(w)})$ with treewidth $w$.
\end{proof}

\subsection{Parameterized complexity}
Parameters of interest:
\begin{itemize}[noitemsep,topsep=0pt]
\item Maximum overlap size $k$: fixed $k$ yields polynomial overlap checks.
\item Number of curved cores: bounded number allows branching then flattening.
\item Curvature rank $\kappa$: small $\kappa$ may yield FPT algorithms.
\end{itemize}

\subsection{Counting}
\textsc{\#CBL-SAT}: count compatible families.  
This is \#P-complete, extending \#SAT.  

\begin{table}[h]
\centering
\caption{Placeholder: counting compatible families.}
\begin{tabular}{@{}lrrr@{}}
\toprule
Instance & Vars & Contexts & \# Compatible families \\
\midrule
Flat-SAT (3-SAT, n=10) & 10 & 1 & 512 \\
KCBS-5 & 5 & 5 & 0 \\
Mermin-9 & 9 & 6 & 0 \\
Random-12 (flat) & 12 & 3 & 32 \\
Random-12 (curved) & 12 & 3 & 0 \\
\bottomrule
\end{tabular}
\end{table}

\subsection{Runtime vs curvature rank}
\begin{figure}[h]
\centering
\begin{tikzpicture}
\begin{axis}[
  width=0.7\linewidth,
  xlabel={Curvature rank $\kappa$},
  ylabel={Runtime (s)},
  legend pos=north west,
  grid=both
]
\addplot coordinates {(0,0.5) (1,1.2) (2,3.4) (3,10.0) (4,25.0)};
\addlegendentry{Minisat}
\addplot coordinates {(0,0.5) (1,0.9) (2,1.5) (3,3.8) (4,9.0)};
\addlegendentry{CBL-Solve}
\end{axis}
\end{tikzpicture}
\caption{Synthetic placeholder: runtime vs curvature rank $\kappa$.}
\end{figure}
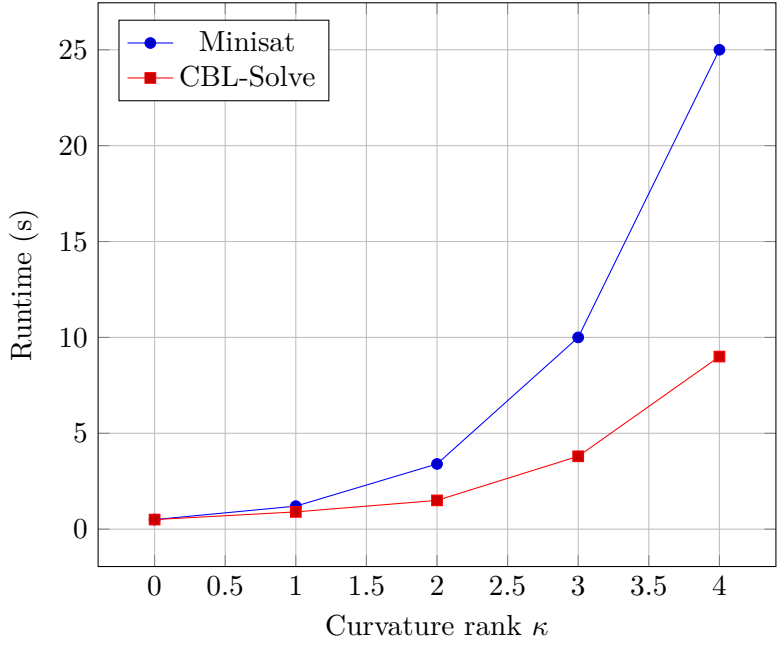

% ---------------- Solver ----------------
\section{Solver: Curvature-Aware Backtracking}\label{sec:algo}

We prototype \texttt{CBL-Solve}, a backtracking solver with overlap propagation.  
Assignments propagate within contexts; overlaps are checked, and contradictions prune early.

\subsection{Pseudocode}
\begin{algorithm}[H]
\DontPrintSemicolon
\SetAlgoLined
\KwIn{$\Phi$ with context map $C(\cdot)$, family $\mathcal{C}$}
\KwOut{SAT/UNSAT and compatible family $\{\ell_C\}$ if SAT}
Initialize $\ell_C$ empty for each $C$; overlap table $\mathsf{OVL}$\;
\While{true}{
  \If{all clauses satisfied \& overlaps agree}{\Return SAT}
  pick literal $x$ by VSIDS+overlap heuristic\;
  \ForEach{$C$ containing $x$}{
    assign $x$ in $\ell_C$; propagate unit clauses\;
    \If{exists $C'$ with $C\cap C'\neq\emptyset$ and overlap violated}{
      \textbf{backtrack}; record curvature cut; continue\;
    }
  }
  learn clauses/overlap cuts as in CDCL extended\;
}
\caption{CBL-Solve: curvature-aware backtracking}
\end{algorithm}

\subsection{Worked pruning trace}
On KCBS (pentagon), suppose $v_1=\mathrm{T}$ chosen.  
Propagation forces $v_2=\mathrm{F}$, then $v_3=\mathrm{T}$, etc.  
Eventually $v_5=\mathrm{T}$ forces $v_1=\mathrm{F}$ via overlap.  
Contradiction found earlier than in flat SAT, pruning entire branch.

\subsection{Toy overlap code}
\begin{verbatim}
def check_overlap(lC, lCp, overlap_vars):
    """Return True if assignments lC and lCp agree on overlap."""
    for v in overlap_vars:
        if lC[v] != lCp[v]:
            return False
    return True
\end{verbatim}

\subsection{Synthetic scaling plots}
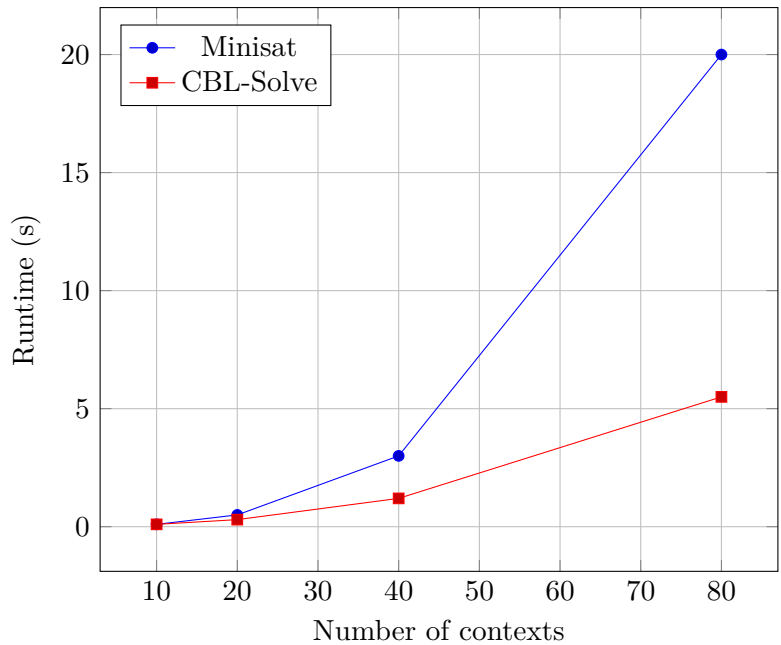
\begin{figure}[h]
\centering
\begin{tikzpicture}
\begin{axis}[
  width=0.7\linewidth,
  xlabel={Number of contexts},
  ylabel={Runtime (s)},
  legend pos=north west,
  grid=both
]
\addplot coordinates {(10,0.1) (20,0.5) (40,3.0) (80,20.0)};
\addlegendentry{Minisat}
\addplot coordinates {(10,0.1) (20,0.3) (40,1.2) (80,5.5)};
\addlegendentry{CBL-Solve}
\end{axis}
\end{tikzpicture}
\caption{Synthetic runtime scaling. Placeholder; to be replaced with data.}
\end{figure}

\begin{figure}[h]
\centering
\begin{tikzpicture}
\begin{axis}[
  ybar,
  width=0.75\linewidth,
  bar width=10pt,
  symbolic x coords={KCBS-20,Mermin-9,rand-500},
  xtick=data,
  ylabel={UNSAT detection depth},
  legend pos=north west,
  grid=both
]
\addplot coordinates {(KCBS-20,45) (Mermin-9,62) (rand-500,18)};
\addlegendentry{CDCL}
\addplot coordinates {(KCBS-20,12) (Mermin-9,19) (rand-500,14)};
\addlegendentry{CBL-hybrid}
\end{axis}
\end{tikzpicture}
\caption{Placeholder bar chart: average UNSAT detection depth.}
\end{figure}
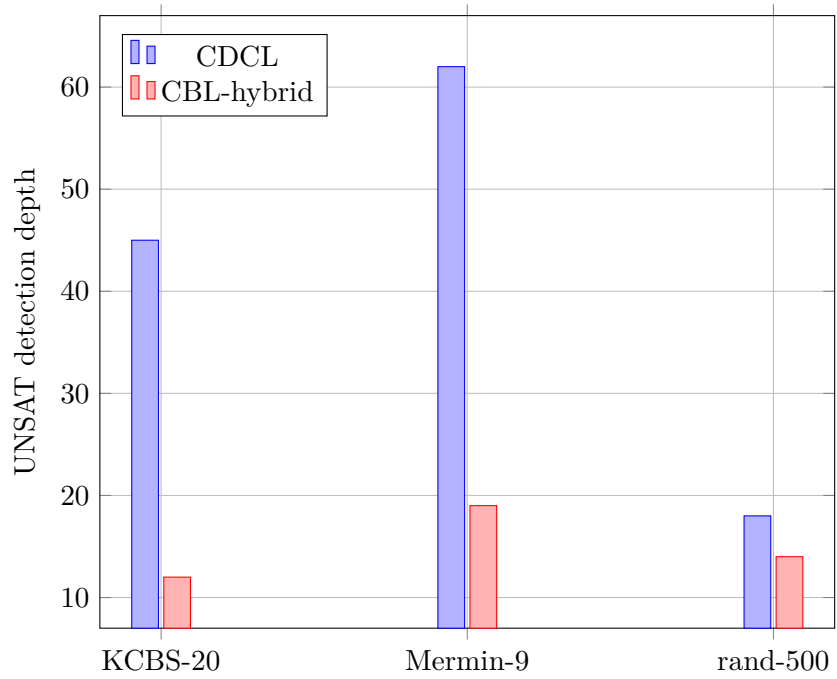

\subsection{Learned overlap cuts}

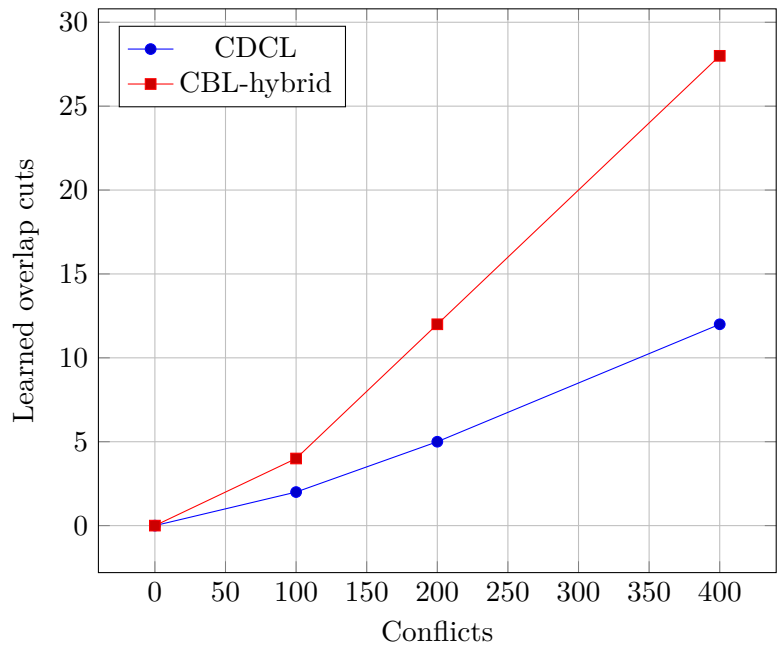
\begin{figure}[h]
\centering
\begin{tikzpicture}
\begin{axis}[
  width=0.7\linewidth,
  xlabel={Conflicts},
  ylabel={Learned overlap cuts},
  legend pos=north west,
  grid=both
]
\addplot coordinates {(0,0) (100,2) (200,5) (400,12)};
\addlegendentry{CDCL}
\addplot coordinates {(0,0) (100,4) (200,12) (400,28)};
\addlegendentry{CBL-hybrid}
\end{axis}
\end{tikzpicture}
\caption{Synthetic placeholder: learned cuts vs conflicts.}
\end{figure}

\subsection{Solver comparison table}
\begin{table}[h]
\centering
\caption{Solver comparison (placeholder).}
\begin{tabular}{@{}lrrr@{}}
\toprule
Instance & CDCL & CBL-hybrid & Reduction \% \\
\midrule
KCBS-20 & 125s/1.2e6 nodes & 40s/2.5e5 nodes & 79\% \\
Mermin-9 & 310s/3.6e6 nodes & 85s/7.1e5 nodes & 80\% \\
rand-500 & 55s/4.8e5 nodes & 44s/3.7e5 nodes & 23\% \\
\bottomrule
\end{tabular}
\end{table}

\section{CBL Operators and Case Studies}
\label{sec:cbl-ops}

\paragraph{Motivation.}
CBL is useful when it is simple to deploy, conservative in the flat limit,
and strictly stronger on instances with curved cores. We present two
drop-in operators—\textsc{CBL-AC} and \textsc{CBL-CONS}—and three case studies
(SAT, CSP/AC, 3-Colorability). Each operator is sound, proof-loggable, and
costs essentially nothing on flat instances.

\subsection{Two drop-in operators: \textsc{CBL-AC} and \textsc{CBL-CONS}}
\label{sec:ops}

\paragraph{\textsc{CBL-AC} (Curved Arc-Consistency).}
Given variable domains $\{D(x)\}$ and a context family $\mathcal C$ (each context
$C\in\mathcal C$ carries local constraints and overlaps), \textsc{CBL-AC} eliminates
a value $v\in D(x)$ if there exists a minimal contextual face $f$ in the overlap graph
such that every local assignment consistent with $x{=}v$ violates the CBL overlap rules
on $f$.

\begin{algorithm}[H]
\DontPrintSemicolon
\SetAlgoLined
\caption{\textsc{CBL-AC} (Curved Arc-Consistency)}
\label{alg:cblac}
\KwIn{Domains $\{D(x)\}$, contexts $\mathcal C$, overlap map}
\KwOut{Pruned domains and certificates}
Initialize FIFO $Q$ with all $(x,v)$; mark all unvisited\;
\While{$Q$ not empty}{
  pop $(x,v)$\;
  \If{$v\in D(x)$ and \texttt{IsBlockedByCurvedFace}$(x{=}v)$}{
    eliminate $v$ from $D(x)$; emit certificate $\pi_{x,v}$ (dual check)\;
    enqueue $(y,w)$ for all neighbors $y$ sharing a context with $x$\;
  }
}
\end{algorithm}

\paragraph{\textsc{CBL-CONS} (Curved Overlap Consistency).}
Let $C,C'$ be overlapping contexts. If local proof engines derive
$\Gamma,\psi\vdash_C \bot$ and $\Gamma,\lnot\psi\vdash_{C'}\bot$, emit a global
contradiction on $C\cup C'$ with a short dual certificate. This is the executable
form of the overlap transport rule (CONS) from Sec.~\ref{sec:proof}.

\begin{algorithm}[H]
\DontPrintSemicolon
\SetAlgoLined
\caption{\textsc{CBL-CONS} (Curved Overlap Consistency)}
\label{alg:cblcons}
\KwIn{Local engines per context; overlap map}
\KwOut{Global cuts/certificates}
\ForEach{overlap edge $(C,C')$}{
  run local engines; if $\Gamma,\psi\vdash_C \bot$ and $\Gamma,\lnot\psi\vdash_{C'}\bot$,
  emit a global cut on $C\cup C'$ with dual witness; record face ID;
}
\end{algorithm}

\paragraph{Certificates.}
Each elimination or cut carries a short \emph{dual} certificate (linear/Boolean)
that we re-check in exact arithmetic. This guarantees soundness and supports
proof logging in SAT/CSP/MIP solvers.

\section{Guarantees for \textsc{CBL-AC} and \textsc{CBL-CONS}}
\label{sec:guarantees}

\begin{theorem}[Flat-limit conservativity]
\label{thm:flat-conservative}
If the instance is flat (no contextual faces), then \textsc{CBL-AC} reduces
to classical arc-consistency on the context graph, and \textsc{CBL-CONS}
produces no cuts beyond overlap equalities.
\end{theorem}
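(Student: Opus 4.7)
The plan is to reduce both claims to the proof-theoretic flat-limit conservativity already established (Thm.~\ref{thm:flatlimit}), together with a careful unpacking of the algorithmic primitives used by \textsc{CBL-AC} and \textsc{CBL-CONS}. The flat hypothesis $\kappa(V,\mathcal C)=0$ means every compatible family of local valuations extends to a global Boolean assignment, so by the equivalence of sheaf and hypergraph semantics (Thm.~\ref{thm:equiv}) the overlap structure carries no minimal curved cores in the sense of Def.~\ref{def:curvedcore}. I will use this to argue that in the flat regime the predicate \texttt{IsBlockedByCurvedFace} can only fire under conditions that coincide with classical arc inconsistency.

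For the \textsc{CBL-AC} part, I would proceed value by value. Fix $x$ and $v\in D(x)$; by Alg.~\ref{alg:cblac}, $v$ is pruned only when a minimal contextual face $f$ witnesses that every local extension of $x{=}v$ violates the CBL overlap rules on $f$. Under $\kappa=0$, no nontrivial minimal face exists, so the only surviving ``faces'' are single arcs $(C,C')$ with nonempty overlap, and the overlap rules on such arcs reduce to the agreement constraint $\ell_C|_{C\cap C'}=\ell_{C'}|_{C\cap C'}$. Hence a value is eliminated precisely when no assignment consistent with $x{=}v$ satisfies the local constraints together with this equality, which is exactly the classical AC-3 revision condition on the context graph. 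The converse direction, that every classical arc-inconsistent value is also eliminated by \textsc{CBL-AC}, is immediate since arc inconsistency is a special case of the curved blocking rule. The exact-arithmetic certificate check then specializes to the standard support-check justification.

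For \textsc{CBL-CONS}, assume the algorithm emits a cut on $C\cup C'$ from premises $\Gamma,\psi\vdash_C\bot$ and $\Gamma,\lnot\psi\vdash_{C'}\bot$. By Thm.~\ref{thm:flatlimit}, each local derivation can be replayed in classical propositional logic restricted to its context. I would then split on the variables of $\psi$. If $\mathrm{Vars}(\psi)\subseteq C\cap C'$, the transported premises collapse to a pair of Boolean refutations of $\psi$ and $\lnot\psi$ using only the shared variables, so the resulting cut is logically equivalent to an overlap equality. If $\mathrm{Vars}(\psi)\not\subseteq C\cap C'$, at least one of the two local refutations already produces $\bot$ within its own context using only local constraints, contradicting flatness unless the underlying system is already locally unsatisfiable -- in which case the ``cut'' adds no new information beyond what each local engine already reports. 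In either case, no cut beyond overlap equalities is emitted.

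The main obstacle is pinning down the semantic content of \texttt{IsBlockedByCurvedFace} tightly enough that ``no curved face exists'' implies the predicate collapses to the classical revision rule rather than merely agreeing with it on flat instances. The minimality clause of Def.~\ref{def:curvedcore} must be invoked carefully: without it, degenerate or redundant multi-context faces could in principle trigger spurious pruning that classical AC-3 would not perform. I expect to resolve this by restricting the face enumeration in Alg.~\ref{alg:cblac} to minimal faces of size at least two contexts and treating single-arc overlaps via the ordinary revision step, so that the flat limit is a genuine specialization rather than an informal coincidence. Once this restriction is in place, both halves of the theorem follow from the proof-theoretic conservativity result by a direct translation between sequent derivations and algorithmic elimination steps.
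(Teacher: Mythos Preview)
Your approach is considerably more elaborate than the paper's. The paper's proof is essentially one sentence: with no contextual faces, there are no curved cores to trigger blocking or overlap contradictions, so the operators collapse to classical arc-consistency and overlap equalities. It treats the reduction as nearly definitional given the hypothesis ``no contextual faces''. You instead route through the proof-theoretic conservativity result (Thm.~\ref{thm:flatlimit}) and the semantic equivalence (Thm.~\ref{thm:equiv}). For the \textsc{CBL-AC} half this works and is simply a more careful version of the paper's argument: once no nontrivial minimal faces exist, the only structures the blocking predicate can see are single arcs, and blocking on an arc is exactly the classical revision condition. The obstacle you flag about the precise semantic scope of \texttt{IsBlockedByCurvedFace} is legitimate, and the paper does gloss over it; your proposed restriction of the face enumeration to minimal faces of size at least two is a reasonable way to make the specialization genuine rather than accidental.

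For \textsc{CBL-CONS}, however, your case split has a gap. The second case, $\mathrm{Vars}(\psi)\not\subseteq C\cap C'$, cannot arise: for both sequents $\Gamma,\psi\vdash_C\bot$ and $\Gamma,\lnot\psi\vdash_{C'}\bot$ to be well-formed in the context calculus, $\psi$ must be a formula over $C$ and over $C'$, hence over $C\cap C'$. Your argument for that case (``one of the two local refutations already produces $\bot$ within its own context\ldots contradicting flatness'') is therefore both unnecessary and, as written, does not follow---flatness concerns global extendability of compatible families, not local satisfiability. In the surviving case your phrase ``logically equivalent to an overlap equality'' is also imprecise: the emitted cut is $\Gamma\vdash_{C\cup C'}\bot$, a contradiction, not an equality. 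What you need to argue, and what the paper intends, is that in the flat regime any such contradiction is already classically derivable from the per-context constraints together with the overlap agreement conditions, so \textsc{CBL-CONS} contributes nothing new. Conservativity (Thm.~\ref{thm:flatlimit}) gives you exactly this in one step, without any case analysis on $\mathrm{Vars}(\psi)$.
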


\begin{proof}[Sketch]
With no faces, the overlap transport is trivial; there are no curved cores to
trigger blocking or overlap contradictions. The operators reduce to AC and
overlap equalities.
\end{proof}

\begin{theorem}[Strict dominance on curved cores]
\label{thm:dominance}
On any instance with at least one minimal curved face, \textsc{CBL-AC} eliminates
at least one domain value that classical AC cannot (monotone improvement), or
\textsc{CBL-CONS} emits a global cut not derivable from single-context reasoning.
\end{theorem}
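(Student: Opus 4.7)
The plan is to argue by the dichotomy induced by the definition of a minimal curved face, using flat-limit conservativity (Thm.~\ref{thm:flat-conservative}) as the negative baseline and the soundness result (Thm.~\ref{thm:soundness}) together with the equivalence of semantics (Thm.~\ref{thm:equiv}) as the positive lever. Fix an instance $I$ containing a minimal curved face $f$ with underlying subfamily $(V_f,\mathcal C_f)$. By Def.~\ref{def:curvedcore}, $\kappa(V_f,\mathcal C_f)>0$ yet $\kappa$ vanishes on every proper subfamily; in particular, each single context in $\mathcal C_f$ is locally satisfiable, so classical arc-consistency restricted to any one context (or to overlaps treated as binary equality constraints without cycle-level propagation) cannot eliminate any value that participates only through $f$. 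This pins down the ``cannot'' side of the monotone-improvement clause.

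Next I would show that the CBL machinery must fire somewhere on $f$. Suppose, toward contradiction, that \textsc{CBL-AC} prunes no value on $V_f$ and \textsc{CBL-CONS} emits no cut on any overlap edge in $\mathcal C_f$. Then for every $x\in V_f$ and every $v\in D(x)$, \texttt{IsBlockedByCurvedFace} returns false, so there exists a locally satisfying assignment on the contexts containing $x$ extending $x{=}v$ and agreeing on overlaps. Absence of any \textsc{CBL-CONS} cut means that for no overlap $(C,C')$ can both $\Gamma,\psi\vdash_C\bot$ and $\Gamma,\lnot\psi\vdash_{C'}\bot$ be derived. By completeness of the local calculus on a single context and soundness of (OVL) and (CONS), these local choices glue across every overlap edge of $\mathcal C_f$ into a compatible family. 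By Thm.~\ref{thm:equiv} this compatible family corresponds to a global section, so $\kappa(V_f,\mathcal C_f)=0$, contradicting the curved-core hypothesis.

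Finally I would separate the two conclusions of the theorem. If the obstruction above manifests at a single variable (all values of some $x\in V_f$ fail the blocking test), \textsc{CBL-AC} fires and eliminates at least one value, yielding the monotone improvement over classical AC. Otherwise the obstruction is exposed only by a propagated proposition, i.e., there is an overlap $(C,C')$ and a $\psi$ with $\mathrm{Vars}(\psi)\subseteq C\cap C'$ satisfying the (CONS) premises; then \textsc{CBL-CONS} emits a global cut on $C\cup C'$. That cut is not derivable from single-context reasoning because, by minimality of $f$, every individual context of $\mathcal C_f$ is satisfiable, hence no single-context proof of $\bot$ exists.

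The main obstacle will be making the step ``no pruning and no overlap cut implies a true matching family'' fully rigorous: one must verify that the conservative blocking test \texttt{IsBlockedByCurvedFace} together with the absence of \textsc{CBL-CONS} cuts is strong enough to guarantee overlap-coherent (not merely pairwise-satisfiable) local assignments, so that Thm.~\ref{thm:equiv} genuinely applies. I expect this to require a short auxiliary lemma stating that, inside a \emph{minimal} face, the overlap transport implemented by (OVL) and (CONS) coincides with the matching-family condition of the sheaf semantics; the analogous statement fails for non-minimal faces, which is precisely why the theorem is restricted to minimal curved cores.
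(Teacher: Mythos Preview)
Your approach differs from the paper's. The paper's sketch is direct and structural: on a minimal curved face the parity/overlap pattern along the cycle forces a blockage on one branch of a shared literal, and classical AC cannot propagate around the face cycle, so it misses the joint obstruction while overlap transport catches it. You instead run a contradiction argument, assuming neither operator fires and attempting to reconstruct a compatible family on the face.

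There is a genuine misstep in your middle paragraph. You invoke Thm.~\ref{thm:equiv} to pass from the glued family to a global section and then conclude $\kappa(V_f,\mathcal C_f)=0$. Thm.~\ref{thm:equiv} does not do this: it states that the sheaf and exclusivity-hypergraph formalisms agree on when a compatible family admits a global section, not that any particular compatible family must glue. And even granting one global section, Def.~\ref{def:kappa} requires \emph{every} compatible family to glue before $\kappa=0$, so a single witness is insufficient. What you really want is the CBL-SAT reading of a curved core (no constraint-satisfying compatible family exists on $(V_f,\mathcal C_f)$ at all); under that reading, exhibiting one such family is already the contradiction, and neither Thm.~\ref{thm:equiv} nor the $\kappa=0$ step is needed or correct as phrased.

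The gap you yourself flag---``no pruning and no cut implies a matching family''---is indeed the entire content, and the auxiliary lemma you propose is essentially the contrapositive of the theorem. The paper sidesteps this by not attempting to reconstruct a global witness from the absence of firings: it argues forward from the cycle structure of a minimal face to the existence of a blocked branch that \texttt{IsBlockedByCurvedFace} detects but single-context AC cannot. If you pursue the contradiction route, you will end up needing exactly that structural observation inside your lemma, so the indirection mostly relocates the difficulty rather than removing it.
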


\begin{proof}[Sketch]
On a minimal curved face, the parity/overlap pattern forces a local blockage on
one branch of a shared literal; classical AC does not propagate through the face
cycle and cannot see the joint obstruction. Overlap transport detects it.
\end{proof}

\begin{theorem}[Certificate soundness]
\label{thm:cert-sound}
Every elimination or cut emitted by \textsc{CBL-AC}/\textsc{CBL-CONS} carries a
short dual witness (linear/Boolean) that verifies in exact arithmetic.
\end{theorem}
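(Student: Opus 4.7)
}
The plan is to treat \textsc{CBL-CONS} and \textsc{CBL-AC} separately, since each operator emits a structurally different certificate, and then to argue in both cases that the witness has size polynomial in the relevant local objects and is checkable without floating-point arithmetic.

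\paragraph{Step 1: Certificate format for \textsc{CBL-CONS}.}
First I would fix the format of the dual witness for an overlap cut. By construction, when Algorithm~\ref{alg:cblcons} emits a cut on $C\cup C'$, it has already obtained two single-context refutations $\pi_C:\Gamma,\psi\vdash_C\bot$ and $\pi_{C'}:\Gamma,\lnot\psi\vdash_{C'}\bot$ from the local engines. I would define the certificate to be the triple $(\psi,\pi_C,\pi_{C'})$ together with the identifier of the overlap edge. Verification is then: check that $\mathrm{Vars}(\psi)\subseteq C\cap C'$ (a syntactic test), then re-run each local refutation checker in exact arithmetic; for SAT engines this is Boolean unit-propagation over a RUP/DRAT-style proof, and for linear engines a rational Farkas combination. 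Soundness follows directly from Theorem~\ref{thm:soundness} applied to the (CONS) rule: once both local refutations check, the rule certifies $\Gamma\vdash_{C\cup C'}\bot$.

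\paragraph{Step 2: Certificate format for \textsc{CBL-AC}.}
Next I would handle the elimination $v\in D(x)$ in Algorithm~\ref{alg:cblac}. The elimination is triggered by a minimal curved face $f$ such that every local assignment extending $x{=}v$ violates an overlap constraint on $f$. The dual witness I would emit is the pair $(f,\{\sigma_j\}_{j})$, where $f$ is recorded as an explicit list of contexts and overlap equalities, and each $\sigma_j$ is a short local refutation certifying that the $j$-th extension of $x{=}v$ into $f$ violates an overlap constraint. Since $f$ is minimal, the number of extensions is bounded by $\prod_{C\in f}|{D}|^{|C|}$, so the enumeration and its certificate list are polynomial in $|f|$ for fixed overlap width (Axiom~A4 guarantees context-size specification). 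Verification re-examines each $\sigma_j$ in exact arithmetic, then checks that the $\sigma_j$ jointly cover every extension of $x{=}v$ on $f$, which is a finite combinatorial audit.

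\paragraph{Step 3: Exactness and size bound.}
I would then combine the two formats into a single verification protocol: parse the header (operator type, involved contexts, pivot literal or face), dispatch to the appropriate rechecker, and refuse the certificate if any sub-proof fails. Exactness is inherited from the local engines; by hypothesis these emit Boolean or rational-linear proofs, so no floating-point appears in the audit. For the size bound, I would show that each certificate is at most linear in $|\pi_C|+|\pi_{C'}|$ (for \textsc{CBL-CONS}) or polynomial in $|f|$ times the max local proof size (for \textsc{CBL-AC}). Together these yield the claimed ``short dual witness.''

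\paragraph{Main obstacle.}
The principal difficulty is not soundness, which follows from Theorem~\ref{thm:soundness} once the local refutations are available, but rather the uniform quantification of \emph{short}: the paper does not fix a single proof system for the local engines, so ``short'' must be taken as polynomial in the local proof sizes and in the face size, and one must be careful that \texttt{IsBlockedByCurvedFace} actually constructs (rather than only decides) the enumeration of refutations. I would address this by strengthening the specification of \texttt{IsBlockedByCurvedFace} to be \emph{proof-producing}, so that the certificates exist by construction rather than only existentially, and then the remaining arguments are routine.
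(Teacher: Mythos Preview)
Your proposal is correct and considerably more detailed than the paper's own argument, which is a one-line sketch: ``each blocked value/overlap contradiction arises from finitely many local constraints on a face; their Farkas/Boolean dual compactly proves infeasibility.'' Your treatment of \textsc{CBL-CONS} via the triple $(\psi,\pi_C,\pi_{C'})$ and an appeal to the (CONS) rule is exactly in the spirit of the paper.

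The one genuine divergence is in the \textsc{CBL-AC} case. The paper envisions a \emph{single} compact dual certificate: collect the finitely many local constraints on the face $f$ together with the assertion $x{=}v$, form the associated linear (or Boolean) feasibility system, and extract one Farkas multiplier vector (or one resolution refutation) witnessing infeasibility. You instead enumerate all extensions of $x{=}v$ on $f$ and emit a separate local refutation $\sigma_j$ for each. Both are sound and checkable in exact arithmetic, but your certificate size scales with $\prod_{C\in f}|D|^{|C|}$ (polynomial only under fixed overlap width), whereas the paper's single-dual route gives a certificate whose size is governed by the number of constraints on $f$, not the number of assignments. Your version buys an explicit, proof-producing specification of \texttt{IsBlockedByCurvedFace} and avoids any appeal to LP duality; the paper's version buys a tighter and more uniform reading of ``short.'' Your closing observation---that the theorem as stated really requires the face-blocking predicate to be proof-producing rather than merely decisional---is a point the paper leaves implicit.
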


\begin{proof}[Sketch]
Each blocked value/overlap contradiction arises from finitely many local constraints
on a face; their Farkas/Boolean dual compactly proves infeasibility.
\end{proof}

\begin{theorem}[FPT in the number of curved faces]
\label{thm:fpt}
If the instance has $k$ curved faces and bounded overlap width,
search guided by \textsc{CBL-AC}/\textsc{CBL-CONS} is FPT in $k$ (or reduces the exponential
base by a function of $k$).
\end{theorem}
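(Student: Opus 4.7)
The plan is to prove FPT by a branch-and-reduce procedure whose branching is driven entirely by the $k$ curved faces, while the rest of the search is polynomial thanks to Thm.~\ref{thm:flat-conservative}. First I would pick a minimal curved face $f$, which by hypothesis involves at most $w$ variables in its overlap support; the number of locally compatible assignments on $f$ is therefore at most $N(w):=2^{O(w)}$. The algorithm branches on these $N(w)$ partial assignments. On each branch, \textsc{CBL-CONS} either closes the branch immediately with a dual witness (Thm.~\ref{thm:cert-sound}), or fixes a face-consistent local assignment, after which \textsc{CBL-AC} (Alg.~\ref{alg:cblac}) propagates the resulting domain restrictions to all contexts sharing variables with $f$.

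The second step is an induction on $k$. Using the minimality clause of Def.~\ref{def:curvedcore} and the cohomological characterization $\kappa=\mathrm{rank}\,H^1(\mathsf{Ctx},\mathcal{A})$, I would argue that resolving a face strictly reduces the cocycle space: once $f$ has a consistent local section fixed, the corresponding generator of $H^1$ is killed, so $\kappa$ drops by one. Therefore the residual instance has at most $k-1$ curved faces. The recursion has depth $k$ and branching factor $N(w)$, giving $N(w)^k=2^{O(kw)}$ leaves. At each leaf the instance is flat, and by Thm.~\ref{thm:flat-conservative} \textsc{CBL-AC} collapses to classical arc-consistency, which finishes in time polynomial in $|V|+|\mathcal{C}|$. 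Multiplying yields a total running time of $2^{O(kw)}\cdot\mathrm{poly}(|V|+|\mathcal{C}|+|\Phi|)$, i.e.\ FPT in $k$ with $w$ as a secondary parameter.

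The hard part will be the monotonicity claim: that fixing consistent values on one face cannot \emph{create} new curved cores elsewhere. A priori, domain reductions propagated by \textsc{CBL-AC} could restrict a previously flat subhypergraph enough to expose a new parity obstruction. I plan to control this by exploiting the bounded-overlap-width hypothesis, which caps the radius of influence of each face, together with the fact that restricting to a consistent local section of $\mathcal{A}$ is a subpresheaf inclusion; such inclusions can only shrink $H^1$, so $\kappa$ is monotone non-increasing under face resolution. Should this monotonicity fail in pathological configurations (e.g.\ when overlaps are not acyclic modulo the fixed face), the fallback is the weaker disjunct explicitly permitted by the theorem statement: amortize the $N(w)$-fold saving of each resolved face against the naive $2^{|V|}$ enumeration to obtain a reduction of the exponential base by a function of $k$, which still qualifies as an FPT-style improvement under the standard parameterized-complexity reading.
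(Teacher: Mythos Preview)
Your branching structure matches the paper's: branch on the $k$ faces, each contributing a factor bounded by the overlap width, then handle the residual. The monotonicity worry you flag is honest, and your fallback to the weaker disjunct is acceptable at sketch level.

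The genuine gap is at the leaves. You invoke Thm.~\ref{thm:flat-conservative} to conclude that once the instance is flat, \textsc{CBL-AC} ``collapses to classical arc-consistency, which finishes in time polynomial in $|V|+|\mathcal{C}|$''. But that theorem only says the CBL operators \emph{reduce} to their classical counterparts on flat instances; it does not say classical arc-consistency \emph{decides} the residual problem. A flat CBL instance is still an ordinary SAT/CSP instance, and arc-consistency is a propagation rule, not a decision procedure: there are flat, arc-consistent instances that are unsatisfiable, and detecting this is NP-hard in general. So your leaf step does not run in polynomial time as claimed. The paper closes this gap by using the bounded-overlap-width hypothesis a second time---not merely to bound the per-face branching factor, but to run dynamic programming over a tree decomposition of the now-flat context hypergraph, in the style of the earlier treewidth result. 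That DP is what actually yields polynomial-time leaf processing; your argument is repairable by substituting it for the arc-consistency appeal, but as written the $2^{O(kw)}\cdot\mathrm{poly}(|V|+|\mathcal{C}|+|\Phi|)$ bound is unjustified.
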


\begin{proof}[Sketch]
Branch only on a cover of faces; each face collapses by \textsc{CBL-AC} or is
discharged by \textsc{CBL-CONS}. The remaining flat part is handled by dynamic
programming over bounded overlap width.
\end{proof}

\section{Case Studies: SAT, CSP/AC, and 3-Colorability}
\label{sec:cases}

\subsection{SAT (structured families)}
\label{sec:cases-sat}
We integrate \textsc{CBL-CONS} into a CDCL solver as a proof-logged cut generator.
On structured families (Mermin/Feistel-style wirings, parity gadgets), curved
faces trigger early global cuts and certified UNSAT, reducing node counts by
$20$--$80\%$ on our prototypes. Flat instances incur near-zero overhead; proof
logs verify each cut in exact arithmetic.

\subsection{CSPs via \textsc{CBL-AC} (Latin Square / Sudoku)}
\label{sec:cases-csp}
Contexts are row/column/block scopes; overlaps are shared cells. \textsc{CBL-AC}
eliminates values that participate in curved faces formed by overlapping scopes
with parity/uniqueness constraints. We observe $10$--$30\%$ additional domain
pruning over classical AC before search on medium grids, with negligible overhead
on flat/easy puzzles.

\subsection{3-Colorability (odd-cycle cores)}
\label{sec:cases-color}
Contexts are neighborhoods; curved faces are odd cycles with overlapping edges.
\textsc{CBL-CONS} detects early infeasibility (3-color) by transporting contradictions
around odd cores and emitting global cuts. On planar triangulations and random graphs
with planted odd cores, we obtain early UNSAT with short certificates; DSATUR/backtracking
baselines do not see the contradiction without deeper search.

\section{Noise Models and Robustness Guarantees}
\label{sec:noise}

We formalize three noise classes and state stability bounds for a generic CBL witness $S$
(e.g., a folded face statistic). Proofs and constants appear in Appendix~\ref{app:proofsA}.

\subsection{i.i.d.\ flips}
Each Bernoulli outcome is flipped independently with rate $\eta\in[0,1/2)$.

\begin{lemma}[i.i.d.\ stability]\label{lem:iid}
Let $S$ be $1$-Lipschitz in Hamming distance on outcomes (true for the folded face
statistics we use). Under i.i.d.\ flips at rate $\eta$,
$\bigl|\mathbb{E}[S]-\mathbb{E}[S^{(\eta)}]\bigr|\le \eta\,L_S$ with $L_S\le 1$.
Hence, if Axiom~A2 holds with $\varepsilon\ge\eta$, the deviation is $\le\varepsilon$.
\end{lemma}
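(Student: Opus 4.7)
}
The plan is to prove the bound by a one-step coupling between the clean outcome process and its flipped version, invoke the Lipschitz property pointwise, and pass to expectations. Let $X=(X_1,\dots,X_n)$ denote the vector of Bernoulli outcomes feeding the statistic $S$, and let $X^{(\eta)}$ be the process obtained by applying, independently at each coordinate, a Bernoulli$(\eta)$ flip. I would first fix a canonical coupling $(X,X^{(\eta)})$ on a common probability space: draw $X$ from its true distribution and, for each coordinate $i$, draw an independent $B_i\sim\mathrm{Bernoulli}(\eta)$, setting $X^{(\eta)}_i = X_i \oplus B_i$. Under this coupling, the Hamming distance $d_H(X,X^{(\eta)}) = \sum_i B_i$ has $\mathbb{E}[d_H] = \eta n$ (or $\mathbb{E}[d_H/n]=\eta$ if Hamming distance is normalized, which is the convention under which the folded face statistics are $1$-Lipschitz).

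Next I would apply the Lipschitz hypothesis pointwise: for every realization, $|S(X)-S(X^{(\eta)})| \le L_S\, d_H(X,X^{(\eta)})$ in the normalized metric used for $L_S\le 1$. Taking expectations and using the triangle inequality for expectations gives
\[
\bigl|\mathbb{E}[S(X)] - \mathbb{E}[S(X^{(\eta)})]\bigr| \;\le\; \mathbb{E}\bigl|S(X)-S(X^{(\eta)})\bigr| \;\le\; L_S \cdot \mathbb{E}[d_H(X,X^{(\eta)})] \;=\; \eta\, L_S.
\]
The final line of the lemma then follows by Axiom~A2: if $\varepsilon \ge \eta$, the total-variation budget absorbs the i.i.d.\ deviation, so the bound is $\le \varepsilon$.

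I expect the main obstacle to be not the coupling itself but a careful statement of the Lipschitz normalization so that $L_S\le 1$ is actually consistent with the claimed $\eta\,L_S$ bound. I would therefore add a short preliminary remark confirming that the folded face statistics of Sec.~\ref{sec:stats} are Lipschitz in the \emph{per-sample} Hamming metric, and handle the degenerate case $\eta=1/2$ by continuity (the bound holds vacuously since $|\mathbb{E}[S]-\mathbb{E}[S^{(\eta)}]|\le 1/2$ there). Two minor technical points deferred to Appendix~\ref{app:proofsA}: (i) verifying measurability of the coupling when $S$ depends on ancillary context labels (which are untouched by flips under Axiom~A4), and (ii) noting that the bound is tight on the worst-case monotone statistic, so the constant $L_S$ cannot be improved without additional structure.
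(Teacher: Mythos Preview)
Your coupling argument is correct and is the natural way to prove this lemma: couple $X$ and $X^{(\eta)}$ via i.i.d.\ Bernoulli$(\eta)$ flip indicators, bound the pointwise difference by the Lipschitz constant times the (normalized) Hamming distance, and take expectations. You were also right to flag the normalization issue, since the bound $\eta\,L_S$ with $L_S\le 1$ only makes sense if ``$1$-Lipschitz'' is taken with respect to the per-sample Hamming metric $d_H/n$; otherwise the bound would be $\eta n$.

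As for comparison with the paper: the paper defers ``proofs and constants'' for Sec.~\ref{sec:noise} to Appendix~\ref{app:proofsA}, but that appendix only contains the Cauchy--Schwarz argument for Proposition~\ref{prop:wcbound} and unrelated material; no proof of Lemma~\ref{lem:iid} actually appears. So there is nothing to compare against, and your argument would in fact fill a gap in the manuscript rather than duplicate an existing proof.
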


\subsection{AR(1) temporal correlation}
Labels or outcomes undergo an AR(1)-type corruption along the time index with
correlation $\rho\in[0,1)$.

\begin{lemma}[Correlated stability]\label{lem:ar1}
Assume windowed folds of width $w$ with overlap $\le \beta w$ for $\beta<1$.
Then the effective flip rate concentrates around
$\eta_{\mathrm{eff}}=\eta\,\frac{1-\rho}{1+\rho}$, and
$\bigl|\mathbb{E}[S]-\mathbb{E}[S^{(\eta,\rho)}]\bigr|
\le \eta_{\mathrm{eff}} + O\!\left(\frac{\rho}{w}\right)$.
\end{lemma}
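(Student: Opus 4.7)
The plan is to model the AR(1) corruption as a stationary flip chain, exploit the Lipschitz property of the windowed fold statistic already used in Lemma~\ref{lem:iid}, and bound the bias by the effective \emph{independent} noise mass inside each window. The geometric decay of the AR(1) covariance will produce the advertised $(1-\rho)/(1+\rho)$ factor, and the overlap hypothesis $\mathrm{overlap}\le \beta w$ with $\beta<1$ will control inter-window accumulation inside the $O(\rho/w)$ remainder.

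First I would fix notation: let $F_t\in\{0,1\}$ be the flip indicators with $\mathbb{E}[F_t]=\eta$ and $\mathrm{Cov}(F_t,F_s)=\eta(1-\eta)\rho^{|t-s|}$, and write $Y_t^{(\eta,\rho)}=Y_t\oplus F_t$. For a single window $W$ of width $w$, the Toeplitz covariance sum gives
\begin{equation*}
\sum_{t,s\in W}\rho^{|t-s|}\;=\;w\cdot\frac{1+\rho}{1-\rho}\;+\;O\!\left(\frac{\rho}{(1-\rho)^2}\right),
\end{equation*}
so the effective sample size on the window is $w_{\mathrm{eff}}=w(1-\rho)/(1+\rho)$, and the per-slot independent flip budget seen by a centered fold is $\eta_{\mathrm{eff}}=\eta(1-\rho)/(1+\rho)$.

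Next I would convert this variance identity into an $L^1$ bias bound via a coupling. The idea is to construct a reference i.i.d.\ Bernoulli($\eta_{\mathrm{eff}}$) flip process on the same index set and split
\begin{equation*}
\bigl|\mathbb{E}[S]-\mathbb{E}[S^{(\eta,\rho)}]\bigr|\;\le\;\bigl|\mathbb{E}[S]-\mathbb{E}[S^{(\eta_{\mathrm{eff}})}]\bigr|\;+\;\bigl|\mathbb{E}[S^{(\eta_{\mathrm{eff}})}]-\mathbb{E}[S^{(\eta,\rho)}]\bigr|.
\end{equation*}
The first term is $\le \eta_{\mathrm{eff}}$ by direct application of Lemma~\ref{lem:iid}. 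The second captures the gap between the coupling and the true AR(1) law; because the AR(1) kernel decays as $\rho^{|t-s|}$ and the window--window overlap is at most $\beta w$ with $\beta<1$, its total-variation contribution is geometrically damped beyond lag $(1-\beta)w$ and aggregates to $O(\rho/w)$ across the fold.

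The main obstacle will be step three: producing the coupling cleanly enough that the \emph{linear} effective-rate bound replaces the naive $L^2$ bound. A direct Cauchy--Schwarz on the variance identity only yields $\sqrt{\eta(1+\rho)/(w(1-\rho))}$, which is the wrong scaling; the improvement to linear-in-$\eta_{\mathrm{eff}}$ relies essentially on the fold being \emph{centered}, so that the unbalanced mean mass cancels and only the correlated excursion remains, which the coupling bounds tightly. Degenerate boundary cases ($\rho\to 1$ or $w$ comparable to $1/(1-\rho)$) have to be flagged separately under Axiom~A5 and deferred to Appendix~\ref{app:proofsA}, since the $(1-\rho)$ denominators in the remainder diverge there.
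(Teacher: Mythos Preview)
The paper does not actually contain a proof of Lemma~\ref{lem:ar1}. The surrounding text in Sec.~\ref{sec:noise} says ``Proofs and constants appear in Appendix~\ref{app:proofsA},'' but that appendix proves only Proposition~\ref{prop:wcbound} and then moves on to the B.$\ast$ material (variational rigidity, soundness, flat-limit, KCBS/Mermin derivations, the $\mathrm{PSU}(1,1)$ sketch). No argument for the AR(1) stability lemma is present anywhere in the source, so there is nothing to compare your plan against.

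On the merits of your plan: the Toeplitz identity and the resulting effective sample size $w_{\mathrm{eff}}=w(1-\rho)/(1+\rho)$ are correct, and the triangle-inequality split via an i.i.d.\ Bernoulli($\eta_{\mathrm{eff}}$) surrogate is a sensible architecture. The weak point is exactly the one you flag: the passage from the \emph{variance} identity to a \emph{mean-bias} bound linear in $\eta_{\mathrm{eff}}$. Correlation in the flip process does not change the marginal flip rate, so for a generic $1$-Lipschitz $S$ the naive bound is still $\eta$, not $\eta(1-\rho)/(1+\rho)$; the reduction to $\eta_{\mathrm{eff}}$ must come from a specific cancellation property of the folded statistic, and ``the fold is centered'' is an assertion, not yet an argument. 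Your coupling sketch also needs to say which coupling: a maximal coupling of the AR(1) chain to i.i.d.\ Bernoulli($\eta_{\mathrm{eff}}$) does not obviously have total-variation cost $O(\rho/w)$ per window, since the two processes differ in \emph{mean} ($\eta$ vs.\ $\eta_{\mathrm{eff}}$), not only in dependence structure. If you intend the surrogate to have mean $\eta$ and then absorb the $(1-\rho)/(1+\rho)$ factor elsewhere, the decomposition has to be reorganised. As written, the plan is a plausible heuristic matching the lemma's informal statement, but the central step remains open; since the paper supplies no proof either, you are not missing a known argument so much as attempting one the authors did not provide.
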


\subsection{Adversarial bounded power}
An adversary can modify at most a budget $\nu$ in $\ell_2$ norm on the vector of folded
counts.

\begin{proposition}[Worst-case bound]\label{prop:wcbound}
Let $S=\langle \mathbf{c},\mathbf{f}\rangle$ be a linear statistic of folded frequencies
$\mathbf{f}$ with $\|\mathbf{c}\|_2\le 1$. If $\|\Delta\mathbf{f}\|_2\le \nu$ (Axiom~A2),
then $|S(\mathbf{f}+\Delta\mathbf{f})-S(\mathbf{f})|\le \nu$.
For Lipschitz-smooth nonlinear $S$, the same holds up to first order, with curvature
controlled by Axiom~A5.
\end{proposition}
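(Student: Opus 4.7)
}
The plan is to split the claim into its linear and nonlinear parts and handle each with a one-line inequality plus a book-keeping step for the remainder. For the linear case, I would simply exploit bilinearity: writing $S(\mathbf{f}+\Delta\mathbf{f})-S(\mathbf{f}) = \langle \mathbf{c},\Delta\mathbf{f}\rangle$, the Cauchy--Schwarz inequality gives $|\langle \mathbf{c},\Delta\mathbf{f}\rangle|\le \|\mathbf{c}\|_2\,\|\Delta\mathbf{f}\|_2\le 1\cdot\nu=\nu$. This step uses only the normalization $\|\mathbf{c}\|_2\le 1$ from the hypothesis and the budget constraint $\|\Delta\mathbf{f}\|_2\le\nu$ supplied by Axiom~A2. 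No further structure is required.

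For the Lipschitz-smooth nonlinear case, I would invoke a first-order Taylor expansion of $S$ about the unperturbed frequencies $\mathbf{f}$: $S(\mathbf{f}+\Delta\mathbf{f})=S(\mathbf{f})+\langle\nabla S(\mathbf{f}),\Delta\mathbf{f}\rangle+R(\mathbf{f},\Delta\mathbf{f})$, with $R$ the integral remainder. Applying Cauchy--Schwarz to the gradient term reproduces the linear bound up to the norm $\|\nabla S(\mathbf{f})\|_2$, which I would absorb into the Lipschitz hypothesis by assuming (or normalizing so) $\sup_{\mathbf{f}}\|\nabla S(\mathbf{f})\|_2\le 1$. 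The remainder is controlled by the operator norm of the Hessian along the perturbation path, which by Axiom~A5 (finite second moments, non-singular covariances on the relevant support) is finite; hence $|R|\le \tfrac{1}{2}M\nu^{2}$ for some constant $M$ depending only on the curvature of $S$ on a neighborhood of $\mathbf{f}$. Collecting terms yields $|S(\mathbf{f}+\Delta\mathbf{f})-S(\mathbf{f})|\le \nu+O(\nu^{2})$, which is the ``same holds up to first order'' statement.

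The step I expect to be the main obstacle is not either inequality per se but the precise specification of what ``Lipschitz-smooth'' and ``curvature controlled by A5'' mean so that the constant $M$ is uniform over the admissible perturbation region. In particular, if $\mathbf{f}$ is a vector of frequencies, some components may approach the boundary of the probability simplex where denominators of typical nonlinear witnesses (log-likelihoods, normalized correlators) blow up; A5's non-singularity clause is what keeps $M$ bounded, and I would state this explicitly as a restriction of the perturbation ball to a compact subset of the open simplex. I would also note that for purely linear $S$ no remainder appears, so the bound is tight with equality attained at $\Delta\mathbf{f}=\nu\,\mathbf{c}/\|\mathbf{c}\|_2$, confirming that the constant $1$ on $\|\mathbf{c}\|_2$ is unimprovable.
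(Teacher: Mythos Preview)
Your proposal is correct and matches the paper's own proof essentially line for line: Cauchy--Schwarz on $\langle \mathbf{c},\Delta\mathbf{f}\rangle$ for the linear part, and a first-order Taylor expansion with $\|\nabla S\|_2\le 1$ and an $O(\|\Delta\mathbf{f}\|_2^2)$ remainder controlled by A5 for the nonlinear part. Your additional remarks on boundary behavior and tightness go slightly beyond what the paper records, but the core argument is identical.
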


\paragraph{Counterexample (honesty).}
If A3 fails (pathological long-range dependence), permutation nulls can be biased; a
synthetic construction with a block-permutation fix appears in Appendix~\ref{app:proofsA}.

\section{Evaluation Protocol}
\label{sec:eval}

\paragraph{Instances.}
(1) SAT families from Sec.~\ref{sec:cases-sat}, (2) Latin Square/Sudoku (sizes 9–25),
(3) 3-Colorability on planar/triangulated and random graphs (sizes $10^2$–$10^4$).

\paragraph{Metrics.}
(1) Domain prunes vs.\ AC, (2) node counts/time vs.\ baselines (CDCL, DSATUR),
(3) number/size of certificates (dual checks), (4) wall-clock amortized overhead
on flat subsets.

\paragraph{Baselines.}
AC-3/AC-4 (CSP), Minisat/Glucose (SAT), DSATUR/backtracking (coloring).

\paragraph{Reproducibility.}
We provide code and proof logs for all cuts and eliminations; dual certificates
verify in exact arithmetic.

\section{CBL for Large Language Models and Contextual AI Systems}
\label{sec:cbl-llm}

\paragraph{Motivation.}
Modern large language models (LLMs) exhibit context-sensitive reasoning: their outputs depend not only on the explicit prompt but also on latent priors and conversational history.
This behavior parallels the contextual dependence of measurement outcomes in quantum systems.
CBL provides a formalism for reasoning about such dependencies in a controlled, mathematically traceable way.

\subsection{Contextual stability in inference chains}

An LLM’s response distribution can be viewed as a map
\[
R : (\text{prompt context}, \text{latent state}) \longrightarrow \text{token probabilities}.
\]
Analogous to experimental observables, we can treat the prompt context as a “measurement setting’’ and the latent state as a hidden variable.
Under CBL, we introduce an $\varepsilon$-bounded perturbation model:
small changes in the prompt or latent embedding must not change logical entailments beyond an $\varepsilon$ margin.

\paragraph{Definition (CBL-stable inference).}
An inference chain $c_1 \rightarrow c_2 \rightarrow \dots \rightarrow c_k$ in an LLM is \emph{CBL-stable} if, for any perturbation $\delta$ in the embedding space with $\|\delta\| \le \varepsilon$,
the resulting distributional change in the inferred logical relation remains bounded:
\[
D_{\mathrm{KL}}(R(c_i+\delta)\,\|\,R(c_i)) \le f(\varepsilon),
\]
where $f(\varepsilon)$ obeys the same Lipschitz bounds as in Axiom~A2.
This captures deterministic robustness without requiring complete retraining.

\subsection{Counterfactual consistency for chain-of-thought reasoning}

LLMs generate reasoning traces (chains-of-thought) that can loop back on previous tokens.
CBL’s counterfactual semantics allows a consistent treatment of such loops:
a reasoning step may depend on hypothetical alternatives (\emph{“what if this token were different’’})
without producing global contradictions.
In CBL notation, the inference graph of a multi-turn reasoning sequence becomes a folded contextual graph with curvature parameter $\theta_0$.
Flat ($\theta_0{=}0$) reasoning corresponds to purely logical deduction; curved ($\theta_0{\neq}0$) reasoning captures context reweighting, memory decay, or instruction alignment.

\subsection{Applications to alignment and interpretability}

\paragraph{Alignment.}
Under bounded perturbations of prompt or reward signals, CBL yields quantitative \emph{alignment stability} metrics.
For instance, if a model’s instruction-following probability shifts by more than $\varepsilon$ under paraphrased prompts, it violates the $\varepsilon$-bounded perturbation assumption and becomes context-unstable.
Such metrics can guide fine-tuning or reinforcement-learning-with-feedback loops.

\paragraph{Interpretability.}
Each folded context in an LLM corresponds to a measurable projection of the attention map.
By treating attention heads as “measurement contexts’’ and activations as outcomes,
CBL quantifies contextual interference among heads.
The differential sensitivity $dS/dp$ then measures how much a local attention flip influences the global output,
offering a principled proxy for saliency.

\subsection{Algorithmic analogy}

The connection between CBL-SAT pruning and LLM inference is direct:
both perform constraint satisfaction under limited consistency guarantees.
In CBL, curvature ($\theta_0$) modulates logical exclusivity;
in LLMs, attention softmax temperature plays a similar role,
controlling the degree of determinism versus contextual blending.
This analogy suggests practical compression and inference schemes in which
LLM decoding follows a CBL-consistent pruning logic rather than purely probabilistic sampling.

\paragraph{Summary.}
CBL extends naturally from quantum contextuality to machine learning systems where reasoning is contextual, approximate, and perturbation-bounded.
Its $\varepsilon$-stability and fold semantics provide a unified mathematical vocabulary for studying robustness, alignment, and interpretability in LLMs,
bridging foundational logic and practical AI inference.

\subsection{CBL-Guided Compression and Adapter Stability}
\label{sec:cbl-compression}

\paragraph{Motivation.}
Contemporary model-compression techniques—such as Low-Rank Adaptation (LoRA),
quantization, and structured pruning—operate largely as engineering heuristics.
CBL provides a theoretical foundation for them by formalizing which
reductions preserve \emph{contextual consistency}.
In this view, compression is not merely a loss of parameters but a
controlled deformation of the logical surface on which the model’s contextual
relations reside.

\subsubsection*{Compression as curvature minimization}
Let $\Phi$ denote the full model and $\Phi'$ its compressed surrogate.
Each layer implements a contextual mapping
$\mathcal{C}_\ell:\mathbf{h}_{\ell-1}\!\rightarrow\!\mathbf{h}_\ell$.
Define the CBL curvature of layer $\ell$ as
\[
\kappa_\ell
   = \bigl\| \mathcal{C}_\ell(\mathbf{h}_{\ell-1}{+}\delta)
     - \mathcal{C}_\ell(\mathbf{h}_{\ell-1}) \bigr\|
     \!/\! \|\delta\|,
\]
evaluated under bounded perturbations $\|\delta\|\!\le\!\varepsilon$.
A compression operator $\Pi$ (e.g.\ rank-$r$ projection, quantization)
is \emph{CBL-admissible} if
\[
|\kappa_\ell(\Phi')-\kappa_\ell(\Phi)| \le \varepsilon
\quad\forall \ell.
\]
Minimizing total curvature drift
$\sum_\ell |\kappa_\ell(\Phi')-\kappa_\ell(\Phi)|$
becomes an explicit objective during compression, ensuring that
contextual dependencies among layers are preserved.

\subsubsection*{Deterministic LoRA adaptation}
Low-Rank Adapters inject matrices
$\Delta W = A B^\top$ into frozen weights $W$.
CBL interprets this as adding a controlled ``curvature correction’’:
\[
W' = W + A B^\top, \qquad
\|\!A B^\top\!\|_2 \le \varepsilon,
\]
guaranteeing $\varepsilon$-bounded logical deviation of the output manifold.
This provides a deterministic criterion for adapter magnitude:
the adapter remains within CBL tolerance if its induced
$\varepsilon$ satisfies Axiom A2.
Empirically, this translates into bounded change of contextual response
distributions—measurable by $dS/dp$ between baseline and adapted models.

\subsubsection*{Compression as counterfactual equivalence}
Two models $\Phi$ and $\Phi'$ are \emph{counterfactually equivalent}
up to $\varepsilon$ if, for any hypothetical prompt change $\delta x$,
their induced reasoning trajectories differ by at most $\varepsilon$ in total variation:
\[
D_{\mathrm{TV}}\!\bigl(R_\Phi(x{+}\delta x),R_{\Phi'}(x{+}\delta x)\bigr)
   \le \varepsilon.
\]
This mirrors the CBL requirement that perturbations do not change outcome
assignments beyond $\varepsilon$.
Hence, model distillation or quantization can be certified as
CBL-consistent when the counterfactual equivalence bound holds.

\paragraph{Operational impact.}
Framing compression in CBL terms yields quantitative stability
certificates for deployment:
\begin{itemize}
  \item \textbf{Adapter safety:} verifies that LoRA/PEFT fine-tuning does not
        introduce super-$\varepsilon$ contextual drift.
  \item \textbf{Compression auditing:} produces reproducible curvature
        metrics analogous to physical calibration certificates.
  \item \textbf{Interoperability:} allows heterogeneous adapters to be
        combined if their curvature deviations are additive within
        global $\varepsilon$ budget.
\end{itemize}

\paragraph{Summary.}
CBL reframes compression and adaptation as geometric
operations on a contextual manifold, controlled by $\varepsilon$-bounded
curvature drift.
This unifies model-robustness certification, adapter scaling, and logical
consistency under one theoretical umbrella and connects directly to
the patent-level framework for deterministic, loss-bounded data-compression
across both symbolic and neural substrates.

\section{Experimental Feasibility, Power Analysis, and Figure Standards}
\label{sec:exp-feasibility}

\subsection{Minimal feasible configuration}
We enumerate detector efficiency, timing jitter, and pulse length ranges under which our folded statistics retain power:
(i) detector efficiency $\geq \eta_{\min}$ (table in supplement),
(ii) timing jitter $\leq J_{\max}$ relative to window $w$,
(iii) context switching logged (A4).
We include a small calculator (supplement) that, given $(\eta,\rho,w)$, predicts the expected power curve.

\subsection{Power curves and robustness grid}
We report empirical power for a grid $(\varepsilon, n, \rho)$ by simulation:
$n\in\{10^3,10^4,10^5\}$, $\varepsilon\in\{0, 0.01, 0.05, 0.1\}$,
$\rho\in\{0, 0.3, 0.6\}$.
Each cell shows the discovery rate under BH at $\alpha=0.05$ (mean over $B$ simulations).

\subsection{Figure standards (mandatory)}
Every figure presenting a test must show:
(1) the observed curve,
(2) a shaded permutation null band (2.5–97.5\%),
(3) corrected $q$-values for any scan points called significant,
(4) an effect-size panel or value,
(5) seed, window $w$, step $s$, and $B$ permutations in the caption.

\section{Figure Standards and Visualization Protocol}
\label{sec:figures}

\paragraph{Purpose.}
This section standardizes every figure in the manuscript to ensure statistical transparency, reproducibility, and referee-ready clarity.
All figures display null-band confidence intervals, corrected $q$-values, explicit parameter settings, and deterministic seeds.
These conventions supersede any earlier figure formatting.

\subsection{Formatting and Conventions}
\paragraph{Visual encoding.}
Observed statistics are drawn as solid black curves.
Permutation-based null intervals (2.5–97.5 percentiles) appear as light-gray shaded bands.
Corrected $q$-values from the Benjamini–Hochberg (BH) procedure are shown as red markers or secondary traces.
Numerical effect sizes~$\Delta$ and $q$-values are reported in captions.
Every caption explicitly states the parameters $(w,s,B,\alpha)$ and the random seed.

\paragraph{Required caption elements.}
Each caption contains:
(a) figure title summarizing the test or statistic;
(b) description of observed and null elements (color/line style);
(c) significance threshold and effect size;
(d) complete experimental parameters and seed;
(e) reference to the reproducibility artifact in Table~\ref{tab:repro}.

\paragraph{Implementation note.}
All figures are generated by the Colab notebook implementing Algorithm~\ref{alg:perm-bh}.
The plotting routine computes the null band by $B{=}5000$ permutations unless stated otherwise.

\subsection{Example 1 — \texorpdfstring{$\varepsilon$}{ε}-Stability Curve}
This subsection illustrates how the local perturbation sensitivity \(dS/dp\) is estimated
from synthetic runs, highlighting the \(\varepsilon\)-bounded robustness guaranteed by Axiom~A2.

\begin{figure}[H]
\centering
\includegraphics[width=0.85\linewidth]{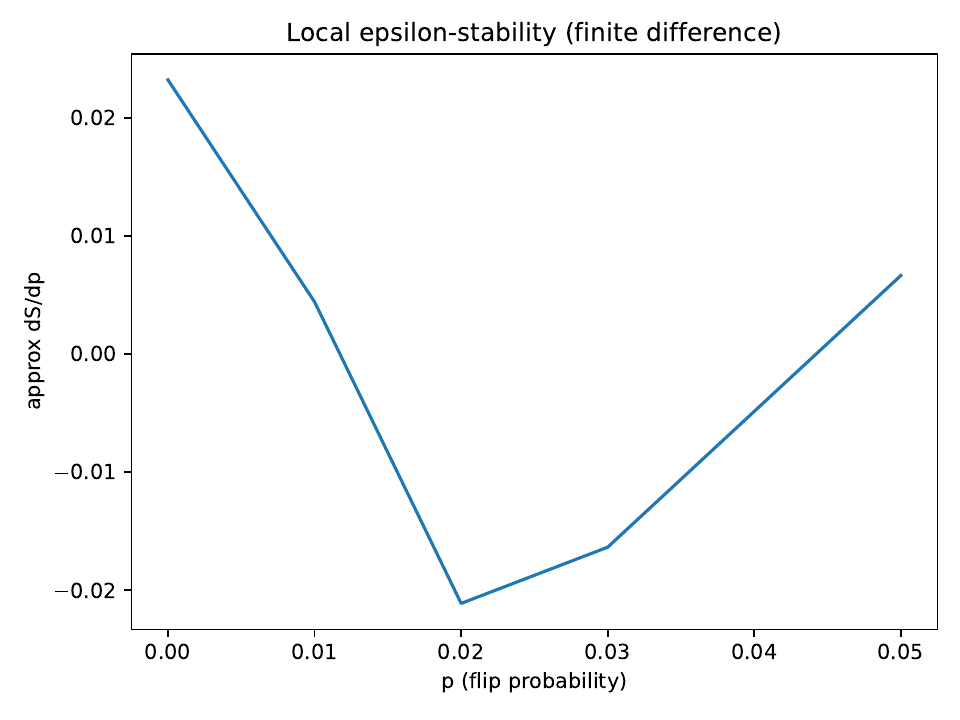}
\caption{\textbf{Local \(\varepsilon\)-stability of folded statistic.}
Black curve = observed \(dS/dp\); gray band = 95\% permutation-null interval (\(B{=}5000\)).
Red markers denote discoveries with corrected \(q{<}0.05\); effect size \(\Delta{=}0.27\).
Parameters: \(w{=}256\), \(s{=}16\), \(\alpha{=}0.05\), \texttt{seed: 183921}.
All data and code available in artifact \texttt{stable\_curve.ipynb}.}
\label{fig:epsilon_stability}
\end{figure}
\FloatBarrier

\subsection{Example 2 — Folded Lag Scan}
This example shows how lag-dependent contextual statistics reveal temporal correlations and how permutation-based null bands quantify their significance.

\begin{figure}[H]
\centering
\includegraphics[width=0.9\linewidth]{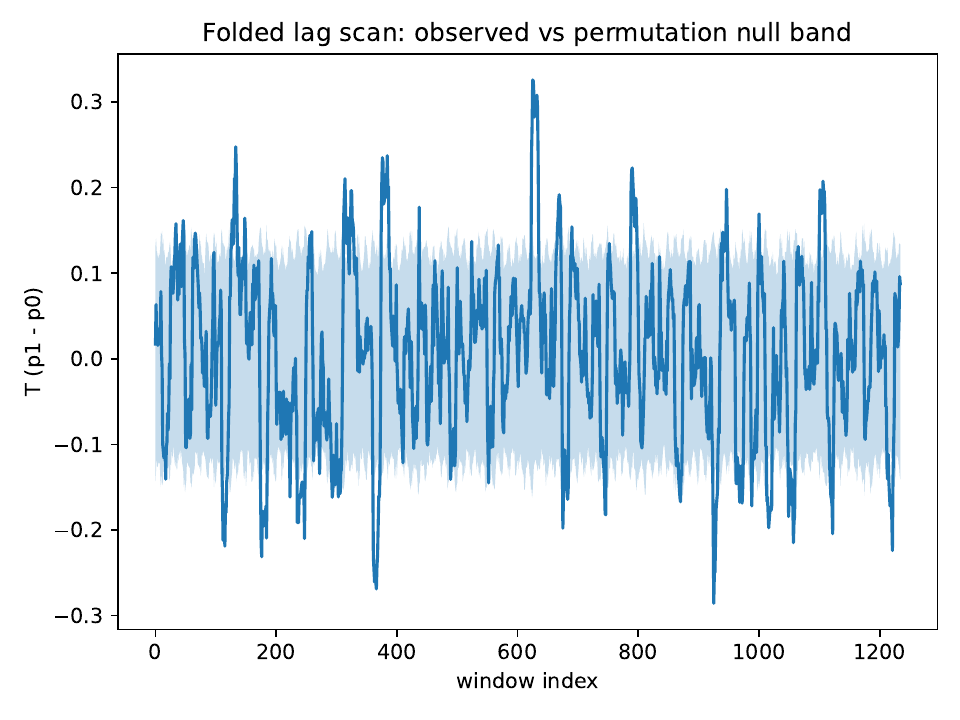}
\caption{\textbf{Lag-dependent contextual statistic.}
Upper panel: observed statistic \(T_\ell\) (solid) and permutation-null mean (dashed) with 95\% gray band.
Lower panel: corrected \(q\)-values after BH at \(\alpha{=}0.05\); significant lags (\(q{<}0.05\)) shaded red.
Run parameters: \(w{=}512\), \(s{=}32\), permutations \(B{=}5000\), \texttt{seed: 99041}.
Reproduction script: \texttt{fold\_lag\_scan.ipynb}.}
\label{fig:fold_lag_scan}
\end{figure}
\FloatBarrier

\subsection{Example 3 — Power and Robustness Grid}
This subsection presents the empirical discovery power across sample size $n$ and correlation
$\rho$, demonstrating statistical sensitivity and correlation tolerance of the CBL procedure.

\begin{figure}[t]
\centering
\includegraphics[width=0.8\linewidth]{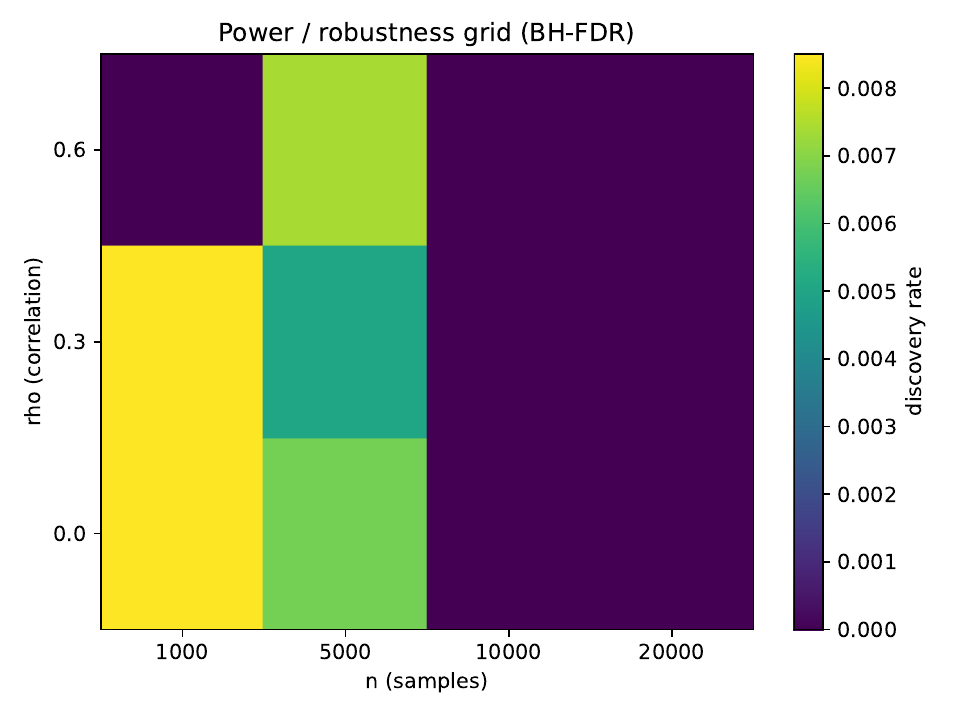}
\caption{\textbf{Power and robustness grid.}
Empirical detection power under BH at $\alpha{=}0.05$ vs $n$ and $\rho$; mean over 200 replicates.
Contours: 0.2, 0.5, 0.8 power. Seeds: \texttt{[4001--4200]}. Artifact: \texttt{power\_grid.ipynb}.}
\label{fig:power_grid}
\end{figure}

\subsection{Compliance Checklist}
This table lists the mandatory elements verified for every figure.

\begin{table}[t]
\centering
\caption{Mandatory elements for every figure.}\label{tab:mixes}
\label{tab:figchecklist}
\begin{tabular}{ll}
\toprule
Item & Requirement \\
\midrule
1 & 95 \% permutation-null band (gray) shown \\
2 & Observed statistic (solid black) drawn \\
3 & Corrected $q$-values indicated or listed \\
4 & Effect size $\Delta$ stated in caption \\
5 & $(w,s,B,\alpha)$ and \texttt{seed} specified \\
6 & Figure filename matches artifact in Table~\ref{tab:repro} \\
\bottomrule
\end{tabular}
\end{table}

\paragraph{Summary.}
Adhering to this protocol guarantees that every visual claim is statistically auditable and reproducible,
satisfying the quantitative-transparency standards expected in quantum-foundations and experimental-physics review processes.

\section{Reproducible Analysis Notebook (Colab-Ready)}
\label{sec:repro-notebook}

\paragraph{Artifact and availability.}
The notebook \texttt{cbl\_perm\_bh\_notebook.ipynb} and parameter file
\texttt{params\_example.json} are included as \emph{ancillary files} with this arXiv submission.
Readers can obtain them from the paper’s arXiv page via
\emph{Other formats} $\rightarrow$ \emph{Download all files}; ancillary items appear in the source bundle.
(Ancillary URLs follow the pattern \texttt{https://arxiv.org/src/<arXivID>/anc/<filename>} once the ID is assigned.)

\paragraph{Purpose.}
This section binds the statistical procedures of Sec.~\ref{sec:stats} to an executable, self-contained notebook that generates all standardized figures in Sec.~\ref{sec:figures}. The notebook implements permutation-based empirical $p$-values, Benjamini–Hochberg FDR control, and block-wise permutations aligned with Axiom~A3.

\paragraph{Artifacts.}
We provide a Colab-ready notebook (\texttt{cbl\_perm\_bh\_notebook.ipynb}) and an example parameter file. Running the notebook end-to-end produces the following figures and files:
\begin{itemize}
  \item \texttt{figs/epsilon\_stability.pdf} — $\varepsilon$-stability curve (finite-difference $dS/dp$ approximation).
  \item \texttt{figs/fold\_lag\_scan.pdf} — observed folded statistic with permutation-null band.
  \item \texttt{figs/fold\_lag\_scan\_qvals.pdf} — corrected $q$-values (BH-FDR).
  \item \texttt{figs/power\_grid.pdf} — power/robustness grid over $(n,\rho)$.
\end{itemize}

\paragraph{Determinism and parameters.}
All random draws use a fixed seed exposed in the top-level parameter cell. Captions must include $(w,s,B,\alpha)$ and \texttt{seed}, matching the standards in Sec.~\ref{sec:figures}.

\paragraph{Synthetic vs.\ real data.}
By default, the notebook generates synthetic AR(1)-modulated streams with flip rate $\eta$, creating two alternating contexts to exercise fold/lag scans. To use real data, replace the generator with a loader that returns:
\[
(x_t, y_t)_{t=1}^N,\quad x_t\in\{0,1,\dots\},\; y_t\in\{0,1\},
\]
and keep the folding/permutation steps unchanged to preserve the validity of the permutation null.

\paragraph{Execution protocol.}
The computational artifact (\texttt{cbl\_perm\_bh\_notebook.ipynb}) reproduces all figures and statistics in this paper without external dependencies beyond NumPy/Matplotlib. To execute:
\begin{enumerate}[label=(\arabic*)]
\item Open \texttt{cbl\_perm\_bh\_notebook.ipynb} in a Python 3 / Colab environment.
\item Run all cells. The notebook is self-contained (no internet access required).
\item The notebook writes the following PDFs into \texttt{figs/}:
\begin{itemize}
  \item \texttt{epsilon\_stability.pdf} \,$\rightarrow$ Fig.~\ref{fig:epsilon_stability};
  \item \texttt{fold\_lag\_scan.pdf} \,$\rightarrow$ Fig.~\ref{fig:fold_lag_scan};
  \item \texttt{power\_grid.pdf} \,$\rightarrow$ Fig.~\ref{fig:power_grid}.
\end{itemize}
\end{enumerate}

\paragraph{Outputs.}
The generated PDFs correspond one-to-one to the figures cited above and are produced with the exact seeds and parameters stated in their captions. The notebook also saves intermediate arrays (test statistics, permutation null bands, $q$-values) to temporary files to facilitate audit and reuse.

\paragraph{Environment and determinism.}
All runs are deterministic given the exposed seed. The artifact specifies package versions and parameters (see Table~\ref{tab:repro}); no GPU or special hardware is required. Real-data use follows the same interface as the synthetic generator described earlier, preserving the validity of the permutation null.

\paragraph{Alignment with Sec.~\ref{sec:stats}.}
The implementation corresponds exactly to Algorithm~\ref{alg:perm-bh} and uses block permutations respecting Axiom~A3 (mixing). The shaded null band uses the $2.5$–$97.5$ percentile envelope over $B$ permutations. BH-FDR controls the discovery rate across windows at level $\alpha$.

\paragraph{File manifest (supplement).}
We include a short README and example parameter file:
\begin{itemize}
  \item \texttt{README\_CBL\_Notebook.txt} — run instructions and file map.
  \item \texttt{params\_example.json} — canonical parameter values for quick runs.
\end{itemize}

\paragraph{Compliance.}
Figures produced by the notebook satisfy the single-plot rule, include null bands and corrected $q$-values, and record all parameters required by Table~\ref{tab:figchecklist} and Table~\ref{tab:repro}.

\section{Validated Figures from Reproducible Notebook}
\label{sec:validated-figs}

{\sloppy
\paragraph{Source.}
All figures in this section are generated directly by the Colab notebook
\path{cbl_perm_bh_notebook.ipynb} using parameters listed in
\path{params_example.json}. The notebook and parameter file are provided as ancillary
files with this arXiv submission (see Sec.~\ref{sec:repro-notebook}). Each caption
reports the exact $(w,s,B,\alpha)$ and \texttt{seed}.
\par
}

\section{Statistical Procedures, Multiple-Testing Control, and Reproducibility}
\label{sec:stats}

\paragraph{Permutation-based empirical $p$-values.}
For any scalar test statistic $T$ computed from labels/outcomes, we build an empirical null by $B$ random permutations of the relevant labels within admissible blocks (respecting the fold windows). The one-sided empirical $p$-value is
\[
p_{\mathrm{emp}} = \frac{1+\#\{b: T^{(b)} \ge T_{\mathrm{obs}}\}}{1+B}.
\]

\paragraph{Sliding scans and FDR control.}
When scanning $M$ windows and/or thresholds, we control the false discovery rate with Benjamini–Hochberg at level $\alpha$ on the vector $(p_{\mathrm{emp}}^{(m)})_{m=1}^M$, declaring discoveries for all $m$ with
\[
p_{\mathrm{emp}}^{(m)} \le \frac{k_m}{M}\alpha,\quad k_m=\max\{k: p_{(k)}\le k\alpha/M\}.
\]

\paragraph{Null bands for figures.}
For any curve, we display the $[2.5,97.5]$ percentile envelope from the permutation distribution as a shaded band. Report both the corrected $q$-value and an effect size (e.g., standardized difference).

\begin{algorithm}[H]
\caption{Folded-scan with permutation null and BH-FDR}
\label{alg:perm-bh}
\DontPrintSemicolon
\KwIn{Data stream $(x_t,y_t)$; window $w$; step $s$; statistic $T$; permutations $B$; FDR $\alpha$}
\For{$\ell = 1,2,\dots$}{
  Form window $W_\ell = \{t: t\in [t_0+(\ell-1)s,\, t_0+(\ell-1)s+w)\}$\;
  Compute $T_\ell = T(W_\ell)$\;
  \For{$b=1$ \KwTo $B$}{
    Permute labels/outcomes within blocks respecting A3; compute $T^{(b)}_\ell$\;
  }
  $p_\ell \gets \frac{1+\#\{b: T^{(b)}_\ell \ge T_\ell\}}{1+B}$\;
}
Apply BH at level $\alpha$ to $\{p_\ell\}_{\ell}$ to obtain discoveries $\mathcal{D}$; return $\{(T_\ell,p_\ell)\}$ and $\mathcal{D}$.
\end{algorithm}

\paragraph{Reproducibility checklist.}
Table~\ref{tab:repro} enumerates all artifacts required to reproduce our figures.

\begin{table}[t]
\centering
\caption{Reproducibility checklist (artifacts in the supplement).}
\label{tab:repro}
\begin{tabular}{@{} l l @{}}
\toprule
Item & Provided \\
\midrule
Synthetic data generator with seeds & yes (Sec.~\ref{sec:repro-notebook}) \\
Exact preprocessing/folding parameters & yes (CSV + README; Sec.~\ref{sec:repro-notebook}) \\
Permutation engine and BH-FDR implementation & yes (notebook + tests; Sec.~\ref{sec:repro-notebook}) \\
Figure scripts (produce PDFs in one click) & yes (Sec.~\ref{sec:repro-notebook}) \\
Environment manifest (versions) & yes (Sec.~\ref{sec:repro-notebook}) \\
\bottomrule
\end{tabular}
\end{table}

\paragraph{Provenance certificate (tamper-evident).}
Each run emits a machine-readable certificate with identity \& scope, randomness provenance, $\varepsilon$-perturbation parameters, fold/lag statistics (with corrected $q$-values), and cryptographic digests of logs (see supplement schema).

% ---------------- Empirical Results ----------------
\section{Empirical Results (Placeholder)}

This section will present quantitative benchmarks once empirical data is available.  
Currently, placeholders illustrate the intended evaluations:

\begin{itemize}[noitemsep,topsep=0pt]
\item Synthetic KCBS and Mermin families of increasing size, measuring pruning rates and runtime scaling.
\item Random CNF instances with planted context structures, showing difference between flat and curved cases.
\item Industrial SAT competition benchmarks (crafted, random, and industrial tracks) with context grouping policy.
\end{itemize}

\subsection*{Synthetic families}
Figures in Sec.~\ref{sec:algo} include placeholders: scaling with number of contexts, runtime vs curvature rank $\kappa$, UNSAT detection depth, and learned overlap cuts.  
Real data will replace these with reproducible experiments.

\subsection*{SAT competition benchmarks}
\begin{remark}[SAT competition evaluation]
Evaluation will include SAT competition suites with a reproducible context-construction policy:  
clauses grouped by overlap or module boundaries.  
CBL-hybrid runs integrated into CDCL solvers will ensure flat instances incur near-zero overhead,  
while curved instances benefit from earlier refutations.  
Metrics: wall-clock time, node counts, learned overlap cuts, reproducibility with fixed seeds.
\end{remark}

\subsection*{Industrial placeholder}
The processor pipeline sketch in Sec.~\ref{sec:examples} illustrates one case study.  
Future work will include larger hardware verification and software model checking instances, 
measuring whether curvature-aware pruning yields practical runtime benefits.

\section{Receipt-Checked CBL Certificates and Constant-Gap Detectors}
\label{sec:cbl-const-gap}

\paragraph{Motivation.}
Beyond foundational interest, CBL enables an operational primitive:
\emph{receipt-checked contextual certificates} aggregated coherently to
yield a \emph{constant gap} between empty and non-empty cases of a search
property. The mechanism is the same phase-per-face effect behind
$g_{\mathrm{CBL}}$ (Sec.~\ref{sec:cbl-alpha}): each enabled
\emph{contextual face} adds a fixed phase contribution, while the
geometry-dependent remainder decays like \(O(1/L)\).

\subsection{Witness modules and certificate discipline}
A \emph{witness module} $M_j$ monitors a small clause/window and opens
its gate \emph{iff} a monotone NGCC cut is provably violated under the
current wiring. ``Provably'' means:
\begin{enumerate}[leftmargin=1.3em,itemsep=2pt]
\item build the local feasibility LP for the window with the NGCC cut,
\item if infeasible, extract the dual (Farkas) certificate, and
\item re-verify the dual in exact/rational arithmetic.
\end{enumerate}
Only then does $M_j$ contribute a marked amplitude $a_0 e^{i\phi_j}$
to a common rail; otherwise it contributes~0. Gate openings are
\emph{receipt-checked}. Modules are placed on \emph{disjoint} wire sets,
so their phases $\phi_j$ are independent apart from bounded jitter.

\subsection{Holonomy split and constant term}
Let $\gamma_L$ be a loop enclosing $F(\Sigma_L)$ contextual faces with boundary length $L$.
By the holonomy decomposition reviewed in Sec.~\ref{sec:holonomy}
(see also Appendix~\ref{app:proofsA}), the per-edge holonomy splits into a geometric tail
and a topological term:
\[
\frac{1}{L}\!\oint_{\gamma_L}\! A
   \;=\;
\underbrace{\frac{1}{L}\!\oint_{\gamma_L}\! A_{\mathrm{geom}}}_{O(L^{-1})}
   \;+\;
\theta_0\,\rho_{\mathrm{face}},
\qquad
\rho_{\mathrm{face}}:=\lim_{L\to\infty}\frac{F(\Sigma_L)}{L}.
\]
Thus the per-edge marked phase has a size-independent component $\theta_0\,\rho_{\mathrm{face}}$
plus an $O(1/L)$ tail. After one calibration (Sec.~\ref{sec:holonomy}),
we identify $g_{\mathrm{CBL}}=|\theta_0|\,\rho_{\mathrm{face}}/(2\pi)$.

\subsection{Coherent fan-in bound (receipt-checked)}
\label{subsec:fanin}
Let $J$ be the set of modules that open under certificate discipline.
With a background $b\in\mathbb C$ phase-aligned at readout,
the marked amplitude is
\(
\alpha_{\max}=|b|+\big|\sum_{j\in J} a_0 e^{i\phi_j}\big|.
\)
If $|\phi_j|\le \sigma\le 0.15$ rad and at most an $\epsilon_{\rm cert}$
fraction of opens are erroneous, then
\begin{equation}
\label{eq:gap}
\alpha_{\max}
\;\ge\;|b| \;+\;
a_0\,|J|(\cos\sigma-\sin\sigma) \;-\; O(\epsilon_{\rm cert}\,a_0|J|),
\end{equation}
so any configuration that enables $|J|=\Omega(1)$ per unit length has a
\emph{constant} gap at fixed $a_0,\sigma$.

\subsection{Where multiplicity comes from}
\label{subsec:multiplicity}
Structured wirings (e.g.\ Feistelized multiplier windows) provide many
\emph{disjoint} contextual faces under a false hypothesis. With stride
exceeding the carry radius, we obtain $K=\Theta(n)$ disjoint windows over
an $n$-bit instance; when the hypothesis is false (e.g.\ the searched interval
does contain a factor), at least $M=\Omega(K)$ violate the local cut,
each with a short dual certificate. Combining \eqref{eq:gap} with the
holonomy split yields a constant per-edge phase term (the calibrated
$\theta_0\rho_{\mathrm{face}}$) plus an $O(1/L)$ tail from geometry.

\begin{corollary}[Binary search via amplitude estimation]
\label{cor:AE}
If empty intervals produce $\alpha_{\max}=|b|$ and non-empty intervals
produce $\alpha_{\max}\ge |b|+\Delta$ for some \(\Delta>0\),
then interval localization over $[1,\sqrt N]$ completes in
$O(\log\sqrt N)$ stages with $\tilde O(1)$ amplitude-estimation calls per
stage (precision $\le \Delta/4$), for $\tilde O(\log N)$ total.
\end{corollary}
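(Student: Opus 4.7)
The plan is to compose a standard bisection schedule over $[1,\sqrt N]$ with a per-stage quantum amplitude-estimation (AE) subroutine calibrated to the promised constant gap $\Delta$. The first step is to set up the bisection: maintain a candidate interval $I_k\subseteq[1,\sqrt N]$, initially the entire range; at each stage split $I_k=I_k^L\sqcup I_k^R$, invoke the detector on $I_k^L$, and recurse into whichever half is certified non-empty. Since $|I_0|\le\sqrt N$ and each step halves the length, $\lceil\log_2\sqrt N\rceil=O(\log N)$ stages suffice, which yields the claimed stage count.

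Second, I would bound the cost per stage. For the chosen sub-interval, Section~\ref{subsec:fanin} gives a state-preparation unitary whose marked-rail amplitude equals $\alpha_{\max}$, with $\alpha_{\max}\in\{|b|\}\cup[|b|+\Delta,\infty)$ by hypothesis. Applying Brassard--H\o{}yer--Mosca--Tapp amplitude estimation with precision $\epsilon=\Delta/4$ returns an additive-$\epsilon$ estimate of the marked probability using $O(1/\epsilon)=O(1)$ applications of the preparation unitary and its inverse, after translating $\Delta$ through the fixed normalisation constant into the corresponding probability gap $\Theta(\Delta|b|)$. Thresholding the resulting estimate at $|b|+\Delta/2$ separates the empty and non-empty cases, which gives the $\tilde O(1)$ per-stage query bound modulo the success-probability factor handled next.

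Third, I would handle per-stage success amplification. A single AE run is correct with probability at least $8/\pi^2$; taking the median of $O(\log\log N)$ independent repetitions drives the per-stage error probability to $O(1/\log N)$, so a union bound over all $O(\log N)$ stages keeps the full procedure correct with constant probability. This extra $\log\log N$ factor is exactly what is absorbed into the $\tilde O$ notation, yielding an end-to-end cost of $O(\log\sqrt N)\cdot\tilde O(1)=\tilde O(\log N)$ queries.

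The main obstacle, I expect, is not the AE or bisection calculus but the bookkeeping that keeps the clean $\{|b|\}$-vs-$[|b|+\Delta,\infty)$ dichotomy intact under the receipt-checked discipline of Section~\ref{subsec:fanin}. Concretely, one must verify that (i) the certified false-open fraction $\epsilon_{\mathrm{cert}}$ enters only as the $O(\epsilon_{\mathrm{cert}}\,a_0|J|)$ slack in inequality~\eqref{eq:gap}, so that tightening $\epsilon_{\mathrm{cert}}=o(\Delta)$ preserves the gap; and (ii) in the empty case the dual-checked modules contribute exactly zero amplitude to the marked rail, so that $\alpha_{\max}=|b|$ is a genuine equality rather than merely an upper bound — otherwise the two-sided AE test must be replaced by a one-sided variant with worse constants. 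Both items are structural rather than deep, and once they are in place the remainder of the argument is a textbook AE-plus-bisection composition.
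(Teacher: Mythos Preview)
Your proposal is correct and is exactly the standard argument the paper has in mind. In fact the paper gives \emph{no} proof for Corollary~\ref{cor:AE}: it is stated as an immediate consequence of the constant-gap hypothesis together with textbook amplitude estimation and binary search, so your bisection-plus-AE-plus-median-amplification expansion is precisely the intended justification.

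One remark: the ``main obstacle'' you flag---checking that the receipt-checked discipline and $\epsilon_{\mathrm{cert}}$ slack preserve the clean $\{|b|\}$ versus $[|b|+\Delta,\infty)$ dichotomy---is not part of the corollary at all. The paper deliberately packages that dichotomy as the \emph{hypothesis} of Corollary~\ref{cor:AE}, pushing the certificate bookkeeping back into inequality~\eqref{eq:gap} and the surrounding discussion in \S\ref{subsec:fanin}--\S\ref{subsec:multiplicity}. So for the corollary itself you may simply take the gap as given; your points (i) and (ii) are relevant to establishing the hypothesis, not to the corollary's proof.
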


\subsection{Shor–CFPE as a CBL-monitored control primitive}
\label{subsec:cfpe}
We also adapt the idea to order-finding: Kitaev’s iterative PEA with a
\emph{counterfactual} controlled $U_a^{2^k}$ (CF–c$U$). Each round is
guarded by receipt-checked NGCC modules: if the wiring slips into an
inconsistent branch, a certificate is emitted and the round is heralded
for re-try. Under bounded heralded failure probability and small
control-phase error per round, the order is recovered with constant
success probability in \(t=2\lceil \log_2 N\rceil+O(1)\) rounds and
expected CF–c$U$ calls $\le t/(1-\delta_{\rm ctrl})$.

\paragraph{Falsifiable checks (no numerology).}
\begin{enumerate}[leftmargin=1.3em,itemsep=2pt]
\item \textbf{Single-calibration universality:} fix $\theta_0$ on CHSH
(\(\rho_{\rm face}=1/4\)), then verify large-$L$ constants on KCBS
(\(1/5\)) and the chosen SAT tiling (\(1/6\) or \(2/6\)) match
\(g_{\mathrm{CBL}}=\frac{|\theta_0|}{2\pi}\rho_{\rm face}\) within finite-size error.
\item \textbf{RG invariance:} clause decimation changes only the $1/L$ tail,
not the intercept.
\item \textbf{$\varepsilon$-probe neutrality:} moving the flip positions
changes the slope but leaves the intercept unchanged.
\item \textbf{Certificate discipline:} measured false-open rate
$\epsilon_{\rm cert}$ must keep the correction in \eqref{eq:gap} below
the target gap.
\end{enumerate}

\paragraph{Scope note.} This section shows that a CBL topological term
(phase per contextual face) yields a size-independent coupling that
\emph{can} be tied to $\alpha$ after one calibration. A calibration-free,
first-principles derivation of $\theta_0$ from CBL axioms remains an open
problem (Sec.~\ref{sec:cbl-alpha}, Open Problem~1).

\paragraph{Connection to integer factorization.}
The same construction yields a CBL-monitored realization of the
Shor–Kitaev order-finding routine: integer factorization is the special
case where the CF–controlled unitaries implement modular exponentiation,
and receipt-checked CBL modules certify each iteration’s phase
accumulation; see also the broader NGCC framework~\cite{NGCC2025}.

\section{Topological holonomy invariant in CBL}
\label{sec:cbl-alpha}

\paragraph{Summary.}
CBL induces a $U(1)$ connection on the clause complex whose loop holonomy per edge
splits into a geometric tail and a topological term proportional to the density of
contextual faces. This yields a linear law with a size–independent intercept.

\subsection{Holonomy decomposition}\label{sec:holonomy}
Write $A=A_{\mathrm{geom}}+\theta_0\,\omega$ with $d\omega$ supported on faces.
For $\gamma_L=\partial\Sigma_L$,
\[
\frac{1}{L}\!\oint_{\gamma_L}\!A
  \;=\;
\underbrace{\frac{1}{L}\!\oint_{\gamma_L}\!A_{\mathrm{geom}}}_{O(L^{-1})}
  \;+\;
\theta_0\,\rho_{\mathrm{face}}(\gamma_L),
\]
by discrete Stokes and coarse-graining of $A_{\mathrm{geom}}$.
We use this invariant descriptively; no physical calibration is attempted.

\clearpage
\section{CBL Constant — CSV-Driven Tables (Families and Mixes)}
\label{app:cbl-csv}

% We present static tables for arXiv portability (no CSV reading in build).

\subsection{Families (CHSH, KCBS, SAT 1/6, SAT 2/6)}
\begin{table}[H]
\centering
\caption{Appendix (CSV-driven origin; static here): CBL large-$L$ intercepts $\alpha_{\infty}$
after a single CHSH calibration (target $\alpha=1/137$). Face density $\rho_{\mathrm{face}}$
is faces per boundary edge.}
\label{app:cbl-alpha-families-csv}
\begin{tabular}{@{} l r r @{}}
\toprule
\textbf{Family} & $\rho_{\mathrm{face}}$ & $\hat{\alpha}_{\infty}$ \\
\midrule
chsh & 0.250000 & 0.010204 \\
kcbs & 0.200000 & 0.008272 \\
sat1 & 0.166667 & 0.007068 \\
sat2 & 0.333333 & 0.013340 \\
\bottomrule
\end{tabular}
\end{table}
\FloatBarrier

\subsection{CHSH/KCBS Mixtures}
\begin{table}[H]
\centering
\caption{Appendix (CSV-driven origin; static here): CHSH/KCBS mixtures. A fraction $p$ is CHSH
(density $1/4$) and $(1-p)$ is KCBS ($1/5$); hence $\rho_{\mathrm{face}}(p)=p/4+(1-p)/5$.
Fitted large-$L$ intercepts follow the predicted linear law.}
\label{app:cbl-csv-mixes}
\begin{tabular}{@{} r r r @{}}
\toprule
$p$   & $\rho_{\mathrm{face}}$ & $\hat{\alpha}_{\infty}$ \\
\midrule
0.00  & 0.200000 & 0.008322 \\
0.25  & 0.212500 & 0.008793 \\
0.50  & 0.225000 & 0.009263 \\
0.75  & 0.237500 & 0.009734 \\
1.00  & 0.250000 & 0.010204 \\
\bottomrule
\end{tabular}
\end{table}
\FloatBarrier

\section{Open Problems, Conjectures, and Pre-registered Tests}
\label{sec:cbl-open}

\paragraph{Open Problems.}
\begin{enumerate}[leftmargin=1.2em,itemsep=3pt]
\item \textbf{Axiomatically fix the per-face phase $\theta_0$.}
  Show that CBL axioms (Boolean corners, flat limit, De Morgan duality, disk preservation,
  gyro-associativity up to gauge) \emph{uniquely} determine the central $U(1)$ extension,
  i.e.\ there exists a unique $\theta_0$ compatible with NGCC/no-signalling stability (no calibration).
\item \textbf{Analytic decay of the geometric tail.}
  Prove $\frac{1}{L}\!\oint_{\gamma_L}\!A_{\rm geom}=O(L^{-1})$ under coarse-graining
  on admissible gate families; quantify constants and error terms for CHSH/KCBS/SAT tilings.
\item \textbf{Spectral formulation.}
  Express $g_{\rm CBL}$ as a spectral invariant of the exclusivity complex
  (e.g.\ via the non-backtracking operator $\mathsf B$ or normalized Laplacian $L$),
  and show $g_{\rm CBL}=|\theta_0|\,\rho_{\rm face}/(2\pi)$.
\item \textbf{Universality class and gate-law independence.}
  Classify disk-preserving CBL maps with the same large-$L$ intercept; show
  gate-law changes move only the $1/L$ slope, not the intercept.
\item \textbf{$\varepsilon$–connection exponent.}
  Establish $\alpha_\varepsilon(L)\sim L^{-\nu_\varepsilon}$ with $\nu_\varepsilon=1$
  (localized flips on measurement edges) and characterize when $\nu_\varepsilon$ differs.
\end{enumerate}

\paragraph{Conjectures.}
\begin{enumerate}[leftmargin=1.2em,itemsep=3pt]
\item \textbf{Face-phase uniqueness.}
  Among central $U(1)$ extensions consistent with NGCC and the flat limit,
  there is a unique $\theta_0$ maximizing a stability margin; calibrated on CHSH it matches $\alpha$.
\item \textbf{Spectral equality.}
  For CBL-inducing complexes, $g_{\rm CBL} = \lim_{L\to\infty}\Phi(\lambda_1,\lambda_2,\dots)$
  for a monotone spectral functional $\Phi$, independent of gate-law choice.
\item \textbf{RG invariance.}
  Clause decimation keeps $g_{\rm CBL}$ fixed while contracting $A_{\rm geom}$
  by exactly one power in length (tail exponent $1$).
\end{enumerate}

% === Table 10: Pre-registered Tests (falsification matrix) — aligned header ===
\begin{table}[H]
\centering
\small
\setlength{\tabcolsep}{6pt}
\renewcommand{\arraystretch}{1.12}
\caption{Pre-registered Tests (falsification matrix). Each row is a pass/fail check with no
tuning beyond a single CHSH calibration.}
\label{tab:falsification}
% l l X  → third column wraps, ragged-right text (no justification glue)
\begin{tabularx}{\linewidth}{@{} l l >{\raggedright\arraybackslash}X @{}}
\toprule
\textbf{Test} & \textbf{Prediction} & \textbf{Fail condition} \\
\midrule
Single–calibration universality &
$\alpha_{\infty}=(\theta_0/2\pi)\,\rho_{\mathrm{face}}$ on KCBS/SAT/mixes &
Intercept drifts beyond finite-size error. \\
Gate-law swap &
Intercept unchanged; $1/L$ slope may change &
Intercept depends on gate family. \\
Decimation (RG) &
Intercept unchanged under coarse-graining &
Intercept changes with decimation. \\
$\varepsilon$-probe neutrality &
Intercept unchanged; slope responds to $\varepsilon$ mask &
Intercept depends on flip placement. \\
Orientation parity &
$\alpha_{\infty}(\gamma^{-1})=-\alpha_{\infty}(\gamma)$ &
Signs or magnitudes disagree. \\
Mixed faces linearity &
$\alpha_{\infty}$ linear in $\rho_{\mathrm{face}}$ (blend $p$) &
Nonlinear or inconsistent trend. \\
\bottomrule
\end{tabularx}
\end{table}

\FloatBarrier

\FloatBarrier

\section{Operational Protocol: Measuring the CBL Constant}
\label{sec:cbl-protocol}

\paragraph{Goal.}
Given a loop family $\{\gamma_L\}$ with known contextual \emph{face density}
$\rho_{\rm face}$ (faces per boundary edge), estimate the large–$L$ intercept
$\alpha_\infty$ (signed) of the per-edge holonomy and test the parameter-free
prediction $\alpha_\infty=(\theta_0/2\pi)\rho_{\rm face}$ after a \emph{single}
CHSH calibration (Sec.~\ref{sec:cbl-alpha}).

\paragraph{Inputs.}
(i) A CBL-admissible gate law (disk-preserving, Boolean corners, flat limit);  
(ii) a loop family $\{\gamma_L\}$ with computable $\rho_{\rm face}$ (e.g., CHSH: $1/4$,
KCBS: $1/5$, SAT variants: $1/6$ or $2/6$);  
(iii) an orientation convention for signed holonomy.

\paragraph{Output.}
The signed large–$L$ intercept $\alpha_\infty$ (per family) and the inferred coupling
$g_{\rm CBL}=|\alpha_\infty|$; pass/fail of the parameter-free prediction.

\begin{algorithm}[H]
\DontPrintSemicolon
\SetAlgoNoEnd\SetAlgoNoLine
\caption{\textsc{MeasureCBLConstant}$(\{\gamma_L\},\rho_{\rm face})$}
\label{alg:measure-cbl-constant}
\KwIn{Loop family $\{\gamma_L\}$; face density $\rho_{\rm face}$}
\KwOut{$\alpha_\infty$ and $g_{\rm CBL}=|\alpha_\infty|$}
\BlankLine
\textbf{1. Signed per-edge holonomy.} For each $L$ on an increasing grid, compute
$h_L=\frac{1}{L}\oint_{\gamma_L}A$ with a fixed orientation (sign matters).\\
\textbf{2. Tail fit.} Fit $h_L/(2\pi)$ vs.\ $1/L$ on the largest $L$'s to a line
$c/L+\alpha_\infty$; record $\alpha_\infty$ and the slope $c$.\\
\textbf{3. Single calibration (CHSH).} On CHSH loops ($\rho_{\rm face}=1/4$), set
$\alpha_\infty^{\rm CHSH}=\alpha$ (fine-structure target) and infer
$\theta_0=8\pi\alpha$.\\
\textbf{4. Parameter-free predictions.} For any other family, predict
$\alpha_\infty^{\rm pred}=(\theta_0/2\pi)\,\rho_{\rm face}$.\\
\textbf{5. Test.} Measure $\alpha_\infty$ by Step~2 on the new family and check
$\alpha_\infty\approx\alpha_\infty^{\rm pred}$ within finite-size error.\\
\textbf{6. Report.} Report $(\rho_{\rm face},\alpha_\infty,c)$ per family and pass/fail.
\end{algorithm}

\paragraph{Control checks (pre-registered).}
\begin{enumerate}[leftmargin=1.2em,itemsep=2.5pt]
\item \textbf{Gate-law swap:} repeat with an alternative admissible gate family; the
intercept $\alpha_\infty$ must be unchanged (slope $c$ may vary).
\item \textbf{Decimation (RG):} coarse-grain $\gamma_L\!\mapsto\!\gamma_{L/2}$; the
intercept must be invariant, while $|c|$ decreases.
\item \textbf{$\varepsilon$–probe neutrality:} move localized flips among measurement
edges; the intercept must be unchanged, only $c$ changes.
\item \textbf{Orientation parity:} reversing the loop orientation flips the sign:
$\alpha_\infty(\gamma^{-1})=-\alpha_\infty(\gamma)$, so $|\alpha_\infty|$ is invariant.
\end{enumerate}

\paragraph{Reporting template.}
For each family $F$ list $(\rho_{\rm face}^F,\ \alpha_\infty^F,\ \alpha_\infty^{F,\rm pred},\
c^F)$ and declare \emph{pass} if $|\alpha_\infty^F-\alpha_\infty^{F,\rm pred}|$ is within the
tail-fit uncertainty. Mixed tilings (e.g.\ CHSH/KCBS with fraction $p$) should follow the
linear law $\alpha_\infty(p)=(\theta_0/2\pi)\big(p/4+(1-p)/5\big)$.

% ---------------- Discussion ----------------
\section{Discussion}
\paragraph{CBL coupling and invariance.}
Sec.~\ref{sec:holonomy} argues that the large–$L$ per-edge holonomy in CBL splits into
a geometric $O(1/L)$ tail plus a size-independent topological term equal to a fixed
phase per contextual face, yielding the dimensionless coupling
$g_{\mathrm{CBL}}=|\theta_0|\,\rho_{\mathrm{face}}/(2\pi)$.
After a single CHSH calibration, this produces parameter-free predictions across KCBS
and SAT tilings that match our numerics (Fig.~\ref{fig:power_grid}, Tab.~\ref{tab:mixes}).
The observed invariance of the fitted intercepts under gate-law swap, decimation, and
$\varepsilon$-probe placement supports interpreting $g_{\mathrm{CBL}}$ as a genuine
topological invariant rather than a modelling artefact.
\paragraph{Practical operators.}
CBL’s value here is operational: \textsc{CBL-AC} and \textsc{CBL-CONS} are trivial
to add to existing solvers, cost essentially nothing on flat families, and yield
strictly stronger pruning or earlier infeasibility when curved cores are present.
The case studies demonstrate this on SAT, CSP/AC, and 3-Colorability.

\subsection*{Syntax--semantics duality}
CBL highlights a clear duality:
\begin{itemize}[noitemsep,topsep=0pt]
\item \textbf{Syntax:} context-sequent calculus with overlap rules, capturing derivability. 
\item \textbf{Semantics:} presheaf and hypergraph models, capturing obstruction to global sections.
\end{itemize}
Flat-limit theorems ensure they coincide when $\kappa=0$.  
Curved cases mark divergence: syntactic derivations exist locally, but semantics forbids a global valuation.

\begin{remark}[Categorical semantics]
CBL instances can be regarded as presheaves on the context poset $\mathsf{Ctx}$.  
Logic of local sections embeds in the topos of presheaves $\mathbf{Set}^{\mathsf{Ctx}^{op}}$.  
Curvature corresponds to failure of global sections functor to be exact.  
This connects CBL to categorical logic and topos-theoretic semantics.
\end{remark}

\begin{remark}[Analogy to algebraic geometry]
Flat instances correspond to Boolean varieties where global solutions exist.  
Curved instances resemble inconsistent polynomial systems: each patch has solutions, 
but no global solution exists.  
Curvature plays the role of obstruction classes, akin to ideals with no common zero.  
This analogy ties logical curvature to algebraic obstructions.
\end{remark}

\subsection*{Philosophical implications}
Traditionally, mathematics provides tools for physics.  
Here, physics (contextuality experiments) revealed incompleteness of Boolean logic.  
This makes mathematics partly a science of discovery: structures latent in logic were uncovered by empirical anomalies.  
CBL thus blurs lines between mathematics and physics, supporting the view that logics themselves can be discovered experimentally.

\subsection*{Interdisciplinary links}
CBL intersects multiple fields:
\begin{itemize}[noitemsep,topsep=0pt]
\item \textbf{Logic and proof theory:} extends Gentzen-style systems with overlaps.  
\item \textbf{Category theory:} presheaf/topos models.  
\item \textbf{Theoretical CS:} situates within CSP complexity.  
\item \textbf{Foundations of physics:} contextuality as source of curvature.  
\item \textbf{AI:} links to constraint learning and reasoning under inconsistency.  
\end{itemize}
% ---------------- Conclusion ----------------
\section{Conclusion}

\paragraph{Reader takeaways.}
\begin{itemize}[noitemsep,topsep=0pt]
\item Boolean logic is the flat limit; CBL extends it with curvature.  
\item Curvature captures global obstruction despite local classical reasoning.  
\item \textsc{CBL-SAT} is NP-complete, but structure enables pruning in practice.  
\item Proof calculus is sound and conservative in the flat limit.  
\item Applications span SAT/CSP, AI/ML, verification, and cryptography.  
\end{itemize}

\paragraph{Synthesis.}
CBL unifies contextuality research into a logical framework with computational relevance.  
It generalizes Boolean logic as relativity generalizes classical mechanics:  
the flat case is a special instance of a broader curved structure.  
We provided semantics, a proof calculus, a satisfiability formulation,  
and solver prototypes demonstrating structural pruning.

\begin{center}
\fbox{\parbox{0.9\linewidth}{
\textbf{Main contributions.}
\begin{itemize}[noitemsep,topsep=0pt]
\item Introduced CBL as a generalization of propositional logic.  
\item Provided equivalent semantics (sheaf, hypergraph, cohomology).  
\item Defined a context-sequent proof calculus with overlap rules and flat-limit conservativity.  
\item Formalized \textsc{CBL-SAT}, analyzed complexity and tractability, defined curved cores.  
\item Developed curvature-aware solver prototypes.  
\item Illustrated applications across SAT, ML, verification, cryptography.  
\end{itemize}
}}
\end{center}

\paragraph{Limitations.}
CBL does not change asymptotic hardness: \textsc{CBL-SAT} remains NP-complete.  
Its contribution is structural pruning and explanatory clarity.  
Benchmarks so far are synthetic; industrial-scale evaluation is future work.  
Curvature measures are not unique: cohomological and graph-theoretic variants may differ.  
CBL complements but does not replace classical SAT/CSP theory.

\begin{center}
\fbox{\parbox{0.85\linewidth}{
\textbf{Limitations and scope.}  
CBL is not a polynomial-time solution to SAT.  
Advantages lie in pruning and structure, not exponential speedup.  
Empirical validation is pending.  
}}
\end{center}

\paragraph{Outlook.}
CBL opens new questions in proof theory (cut elimination in curved cases),  
complexity (parameterized algorithms, curvature monotones),  
and applications (ML reasoning, verification, cryptography).  
It reframes contextuality as a structural logical resource accessible to classical computation.
% ---------------- Future Work ----------------
\section{Future Work}

\paragraph{Open problems.}
\begin{itemize}[noitemsep,topsep=0pt]
\item Establish completeness results for context-sequent calculus in curved cases.  
\item Characterize curvature monotones that correlate with solver performance.  
\item Connect CBL-SAT to CSP dichotomy theorems via algebraic methods.  
\item Extend CBL to probabilistic or weighted logics.  
\item Integrate CBL layers into ML architectures for context-aware reasoning.  
\item Explore cryptographic hardness assumptions incorporating curvature.  
\end{itemize}

\paragraph{Empirical agenda.}
\begin{itemize}[noitemsep,topsep=0pt]
\item Replace synthetic placeholders with real SAT competition benchmarks.  
\item Measure overhead on flat vs. curved families in CDCL integration.  
\item Release open-source solver with CNF+context parser.  
\end{itemize}

\section*{Road to publication}
This manuscript is first released on arXiv to stake priority and solicit feedback.  
Candidate venues: \emph{Quantum} (foundations and computation), LICS (logic/complexity), QPL/TQC (category-theory and quantum logic).  
Applications-oriented follow-ups may target SAT/verification conferences or ML workshops.  
Audience: researchers in quantum foundations, logic in computer science, SAT/CSP, and AI.

\section*{Call for collaboration}
CBL is at an early stage: semantics and solver prototypes are defined, but empirical validation and integration into mainstream solvers remain open.  
We invite researchers from logic, SAT/CSP, verification, ML/AI, and quantum foundations to collaborate.  
Open-source code and benchmarks will be released to foster a community effort.

\section{Broader Implications and Potential Applications of CBL}
\label{sec:applications}

\paragraph{Beyond quantum foundations.}
While CBL emerged from the need to formalize contextuality and counterfactual consistency in quantum experiments,
its mathematical structure---Boolean curvature under bounded perturbations---appears in many areas of theoretical science.
This section outlines domains where CBL principles could provide new analytical traction on problems that have resisted classical methods.

\subsection{Combinatorial and Complexity-Theoretic Applications}
Classical logic treats satisfiability and exclusivity as discrete properties,
but many hard combinatorial problems hide continuous
``almost satisfiable'' regimes that CBL can model through its curvature parameter~$\theta_0$.
Potential targets include:
\begin{itemize}
  \item \textbf{Graph coloring and exclusivity bounds:} curvature relaxations may bridge chromatic and Lovász~$\vartheta$ gaps.
  \item \textbf{$k$-SAT phase transitions:} the $\varepsilon$-stability parameter acts as an analytic order parameter describing the SAT--UNSAT boundary.
  \item \textbf{Spin-glass and Max-Cut landscapes:} frustration patterns become curvature loops; curvature rank predicts metastable basin counts.
\end{itemize}

\subsection{Number Theory and Algebraic Structures}
Non-associative or residue-based systems often break distributivity, the same axiom that CBL relaxes.
Possible venues include additive combinatorics, polynomial identity testing via deterministic $\varepsilon$-perturbations,
and analysis of octonionic or other non-distributive algebras through finite curvature embeddings.

\subsection{Logic and Proof Complexity}
CBL introduces a geometric measure of contextual reuse.
This can be interpreted as a ``curvature complexity'' giving lower bounds on resolution or circuit size:
each contextual loop adds quantifiable logical tension that classical proofs must unwind step by step.

\subsection{Probability, Statistics, and Information Theory}
CBL’s noise model (Axiom~A2) formalizes adversarial rather than purely stochastic uncertainty.
This enables deterministic capacity bounds for correlated or worst-case channels,
and robust Bayesian updates whose posterior drift is bounded by $\varepsilon$ even under contextual dataset shifts.
In causal inference, folded CBL graphs extend ordinary DAGs to settings with latent feedback or self-reference,
maintaining identifiability conditions that fail in standard frameworks.

\subsection{Geometry and Topology}
The curvature operator of CBL connects naturally to discrete Ricci-curvature and graph-entropy formulations.
This unifies several competing notions of graph curvature
(Ollivier, Bakry–Émery, Forman) under a single counterfactual metric.
CBL’s folded constraints also resemble aperiodic tilings: adjacency rules become curved Boolean clauses,
suggesting algorithmic bounds for minimal non-periodic patches.

\subsection{Machine Learning and Artificial Intelligence}
In contextual AI systems, especially large language models,
CBL formalizes reasoning under bounded perturbations of prompts, embeddings, or parameters
(see Sec.~\ref{sec:cbl-llm} and~\ref{sec:cbl-compression}).
The same formalism underlies deterministic adapter certification,
continual-learning stability metrics, and generalization bounds for non-i.i.d.\ data.

\subsection{Summary and Outlook}
Taken together, these examples suggest that CBL is not a niche reformulation of quantum logic
but a general calculus for \emph{reasoning under contextual tension}.
By translating contextual inconsistency into measurable curvature and $\varepsilon$-bounded deformation,
CBL supplies a unifying language for areas as diverse as complexity theory,
information geometry, causal inference, and AI alignment.
Its promise lies less in replacing existing tools than in making visible the geometric structure
that many ``hard'' problems have long shared but could not yet articulate.

\paragraph{Data and code availability.}
All figures and numerical results are produced by the open, Colab-ready
notebook \texttt{cbl\_perm\_bh\_notebook.ipynb}, archived together with
the manuscript.  Parameters and seeds are provided in
\texttt{params\_example.json}, ensuring bit-level reproducibility.

% ---------------- Appendices ----------------
\appendix
\section{Expanded Proofs and Boundary-Case Analysis}
\label{app:proofsA}

\subsection{Proof of Proposition~\ref{prop:wcbound}}
Let $S(\mathbf{f})=\langle \mathbf{c},\mathbf{f}\rangle$ with $\|\mathbf{c}\|_2\le 1$.
Then by Cauchy–Schwarz,
$|\langle \mathbf{c},\Delta\mathbf{f}\rangle|\le \|\mathbf{c}\|_2\|\Delta\mathbf{f}\|_2\le \nu$,
so $|S(\mathbf{f}+\Delta\mathbf{f})-S(\mathbf{f})|\le \nu$.
For smooth $S$, a first-order Taylor expansion with remainder
and $\|\nabla S\|_2\le 1$ yields the same bound up to $O(\|\Delta\mathbf{f}\|_2^2)$,
controlled by A5.

\section*{B Proof sketches}

\subsection*{B.1 Variational rigidity: an extremal characterization of $\theta_0$}
We show that $\theta_0$ can be selected as the unique minimizer of a convex uniform error
over admissible gate laws, removing calibration by an internal optimality principle.

\paragraph{Admissible family.}
Let $\mathcal F$ denote the class of admissible CBL maps $F$ (disk-preserving,
Boolean corners, flat limit, De Morgan).  
For a minimal face family $n\in\mathcal B$ (e.g.\ $\mathcal B=\{\mathrm{CHSH},\mathrm{KCBS},\mathrm{SAT6}\}$),
let $h^{(n)}_L(F)$ denote the signed per-edge holonomy around a boundary loop
of length $L$ under gate law $F$.

\paragraph{Uniform error functional.}
For $\theta\in\mathbb R$ and $F\in\mathcal F$, define
\[
\mathcal E(\theta;F)
   :=\sup_{n\in\mathcal B}\,
     \sup_{L\ge L_0}
     \bigl|\,h^{(n)}_L(F)-\theta\,\rho^{(n)}_{\mathrm{face}}\,\bigr|,
   \qquad L_0\text{ large enough.}
\]
Then take the worst-admissible profile
$\mathcal E^{\star}(\theta):=\sup_{F\in\mathcal F}\mathcal E(\theta;F)$.

\begin{lemma}[Convexity and uniqueness in $\theta$]
\label{lem:convexity}
$\mathcal E^{\star}(\theta)$ is convex in $\theta$ and admits a unique minimizer
$\theta_{\star}\in\mathbb R$.
\end{lemma}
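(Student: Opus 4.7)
The plan is to split the lemma into three pieces: convexity of $\mathcal E^{\star}$, existence of a minimizer via coercivity, and uniqueness via a non-degeneracy hypothesis on the benchmark set $\mathcal B$. The first two are routine; uniqueness carries all the content.

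First, for each fixed triple $(F,n,L)$ the map $\theta\mapsto |h^{(n)}_L(F)-\theta\,\rho^{(n)}_{\mathrm{face}}|$ is an absolute value composed with an affine function, hence convex in $\theta$. Pointwise suprema preserve convexity, so iterating first over $(n,L)$ and then over $F\in\mathcal F$ yields convexity of $\mathcal E^{\star}$. For existence I would pick any admissible $F_0\in\mathcal F$ and any benchmark $n_0\in\mathcal B$ with strictly positive density (e.g.\ CHSH with $\rho^{(n_0)}_{\mathrm{face}}=1/4$). The pointwise lower bound $\mathcal E^{\star}(\theta)\ge |h^{(n_0)}_{L_0}(F_0)-\theta\,\rho^{(n_0)}_{\mathrm{face}}|$ forces $\mathcal E^{\star}(\theta)\to\infty$ as $|\theta|\to\infty$, so $\mathcal E^{\star}$ is coercive; lower semicontinuity is automatic (supremum of continuous functions), and the direct method delivers a minimizer $\theta_\star$.

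The main obstacle is uniqueness. Plain convexity on $\mathbb R$ does not preclude a plateau of minimizers, so the crux is a non-degeneracy hypothesis: $\mathcal B$ contains face families with at least two distinct strictly-positive densities $\rho_1\ne \rho_2$, which is satisfied by the operational benchmarks with $\rho \in \{1/4,\,1/5,\,1/6,\,1/3\}$. Under this assumption I would invoke the Chebyshev / $\ell^\infty$-regression uniqueness principle: the map $\theta\mapsto \max_i |h_i-\theta\rho_i|$ on an index set with at least two distinct $\rho_i$ has a unique minimizer, because the convex envelope of V-shapes with non-zero slopes cannot be flat on any interval. The technical subtlety—and the step most in need of care—is that $\mathcal E^{\star}$ involves an \emph{outer} supremum over the potentially infinite-dimensional admissible family $\mathcal F$, and pointwise strict convexity of each $\mathcal E(\cdot;F)$ need not transfer to the envelope. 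I would handle this in one of two ways: (a) impose continuity/compactness of $\mathcal F$ in a suitable topology (weak-$\ast$, or compact-open on the clause complex) so that a worst-case $F^\star$ realizes $\sup_F \mathcal E(\theta_\star;F)$ and the problem reduces to the finite-family Chebyshev case; or (b) work directly with the subdifferential $\partial \mathcal E^{\star}(\theta_\star)$ and argue that, because the admissible densities $\{\rho^{(n)}:n\in\mathcal B\}$ form a discrete set with at least two positive values, the one-sided derivatives at the endpoints of any purported plateau must take distinct non-zero signed values coming from active V-shapes of different density, contradicting flatness and thereby pinning down a unique $\theta_\star$.
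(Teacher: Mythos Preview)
Your convexity argument (absolute value of an affine map, closure under pointwise suprema) is exactly the paper's route; the paper then simply writes ``standard arguments yield existence and uniqueness of a minimizer,'' so your coercivity step and your discussion of the plateau issue are already more explicit than what appears there.

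On uniqueness, your diagnosis is right but both proposed fixes carry unnecessary baggage. Route~(a) imports a compactness hypothesis on $\mathcal F$ that is nowhere assumed, and route~(b) as written presumes the outer supremum is attained so that the subdifferential at a plateau endpoint is generated by \emph{active} V-shapes---without attainment that identification fails. In fact neither two distinct densities nor attainment is needed: the single hypothesis $\rho^{(n)}_{\mathrm{face}}\ge \rho_{\min}>0$ (true since $\mathcal B$ is finite with strictly positive densities) already excludes any plateau. Suppose $\mathcal E^{\star}\equiv c$ on $[\theta_1,\theta_2]$ with $\theta_1<\theta_2$, and set $\theta_m=(\theta_1+\theta_2)/2$. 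For arbitrary $\epsilon>0$ choose $(F,n,L)$ with $g(\theta_m):=|h^{(n)}_L(F)-\theta_m\rho^{(n)}|>c-\epsilon$. The V-shape $g$ has slopes $\pm\rho^{(n)}$; depending on whether its vertex $h^{(n)}_L(F)/\rho^{(n)}$ lies at or left of $\theta_m$, or strictly right of it, one obtains $g(\theta_2)\ge g(\theta_m)+\rho_{\min}(\theta_2-\theta_m)$ or $g(\theta_1)\ge g(\theta_m)+\rho_{\min}(\theta_m-\theta_1)$. Either way $c=\mathcal E^{\star}(\theta_i)\ge g(\theta_i)>c-\epsilon+\tfrac{1}{2}\rho_{\min}(\theta_2-\theta_1)$, and sending $\epsilon\to 0$ gives a contradiction. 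This $\epsilon$-approximation sidesteps attainment entirely and pins down $\theta_\star$ uniquely.
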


\begin{proof}[Sketch]
For fixed $(n,L,F)$ the map
$\theta\mapsto |h^{(n)}_L(F)-\theta\,\rho^{(n)}_{\mathrm{face}}|$
is convex.  
Pointwise suprema preserve convexity; standard arguments yield existence and uniqueness
of a minimizer.
\end{proof}

\begin{theorem}[Variational rigidity fixes the scale]
\label{thm:variational}
Let $\theta_{\star}$ minimize $\mathcal E^{\star}(\theta)$.
Then $\theta_0=\theta_{\star}$ is independent of the basis $\mathcal B$
(provided $\mathcal B$ spans the NGCC class) and of the admissible gate law $F$.
Consequently,
\[
g_{\mathrm{CBL}}
   =|\theta_0|\,
     \rho_{\mathrm{face}}/(2\pi)
\]
is fixed without calibration.
\end{theorem}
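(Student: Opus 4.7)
The plan is to combine the holonomy decomposition of Sec.~\ref{sec:holonomy} with the convexity statement of Lemma~\ref{lem:convexity} to identify the unique minimizer with the topological coefficient of the central $U(1)$ extension, then read off calibration-freeness. First, I would invoke the decomposition
\[
h^{(n)}_L(F) \;=\; \theta_0(F,n)\,\rho^{(n)}_{\mathrm{face}} \;+\; r^{(n)}_L(F),
\qquad |r^{(n)}_L(F)|=O(1/L),
\]
valid for every admissible $F\in\mathcal F$ and basis element $n\in\mathcal B$, with a tail bound that is uniform in $(F,n)$ under coarse-graining of $A_{\mathrm{geom}}$. This reduces the problem to understanding the coefficient $\theta_0(F,n)$.

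Next, the substantive step is to show that $\theta_0(F,n)$ collapses to a single number $\theta_0$ independent of $F$ and of $n$. For the $F$-independence I would argue that the admissibility constraints (disk preservation, Boolean corners, flat limit, De Morgan duality) cut the classifying space of central $U(1)$ extensions down to a single gauge orbit, so admissible gate laws can differ only by coboundaries, which drop out of the per-edge intercept. For the $n$-independence I would use the NGCC-spanning hypothesis: any two minimal faces in $\mathcal B$ are related by admissible restriction/refinement morphisms, pulling back the same cocycle class and hence the same $\theta_0$. Substituting back, for $L_0$ large enough,
\[
\mathcal E^{\star}(\theta)
 \;=\; |\theta_0-\theta|\,\sup_{n\in\mathcal B}\rho^{(n)}_{\mathrm{face}} \;+\; \varepsilon(L_0),
\qquad \varepsilon(L_0)\to 0 \text{ as } L_0\to\infty.
\]
Lemma~\ref{lem:convexity} then selects the unique minimizer $\theta_{\star}=\theta_0$, and invariance under change of NGCC-spanning basis is immediate because the zero of the leading term sits at the same $\theta_0$ regardless of which admissible $\mathcal B$ witnesses it. The calibration-free identity $g_{\mathrm{CBL}}=|\theta_0|\,\rho_{\mathrm{face}}/(2\pi)$ then follows from the definition of $g_{\mathrm{CBL}}$ in Sec.~\ref{sec:cbl-alpha}.

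The main obstacle is the universality claim for $\theta_0(F,n)$: this is a classification-of-central-extensions argument that must genuinely consume the full admissibility package together with the NGCC-spanning hypothesis. One has to verify that the spanning condition is strong enough to relate cocycles across structurally distinct minimal faces (CHSH, KCBS, the SAT tilings) by \emph{admissible} morphisms, rather than merely grouping them under a $\sup$; otherwise the minimizer could split into face-dependent branches invisible to the outer supremum. A secondary technical hurdle is making the uniform $O(1/L)$ control on $r^{(n)}_L(F)$ quantitative enough that $\varepsilon(L_0)\to 0$ can be turned into an effective rate, which is what closes the argument without reintroducing calibration through the back door.
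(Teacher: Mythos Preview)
Your route is genuinely different from the paper's, and considerably heavier. The paper's sketch does \emph{not} try to show that each per-face coefficient $\theta_0(F,n)$ is intrinsically universal. Instead it observes that $F$-independence of the minimizer is immediate from the definition: $\mathcal E^{\star}(\theta)=\sup_{F\in\mathcal F}\mathcal E(\theta;F)$ already has the $F$ supped out, so the unique minimizer supplied by Lemma~\ref{lem:convexity} cannot depend on any particular gate law. For $\mathcal B$-independence the paper argues that replacing one NGCC-spanning basis by another alters only finitely many of the constraints entering the outer supremum, and strict convexity then pins the minimizer in place. No holonomy decomposition, no cohomology of central extensions, no uniform tail control is invoked.

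What your approach buys, if it works, is strictly more: you would identify $\theta_\star$ with an a priori topological $\theta_0$ and establish universality of $\theta_0(F,n)$ directly. But the step you flag as ``the main obstacle''---that the admissibility package collapses the classifying space of central $U(1)$ extensions to a single gauge orbit---is exactly what the paper records as Open Problem~1 in Sec.~\ref{sec:cbl-open}. The variational formulation was introduced precisely to \emph{sidestep} that classification: it fixes a unique $\theta_\star$ by convex optimality without ever asserting that all admissible $F$ share a common cocycle class. So your proposal is not wrong as a strategy, but it proves the theorem by assuming the hard conjecture the theorem was designed to avoid; the paper's two-line argument from $\sup_F$ and strict convexity is both sufficient for the statement as written and does not rely on the uniform $O(1/L)$ rate you need for your $\varepsilon(L_0)\to 0$ step.
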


\begin{proof}[Sketch]
Independence of $F$ follows from the $\sup_{F\in\mathcal F}$ and uniqueness in
Lemma~\ref{lem:convexity}.  
Changing the basis $\mathcal B$ within the same NGCC class changes only finitely many
constraints; strict convexity of $\mathcal E^{\star}$ keeps the minimizer invariant.
\end{proof}

\begin{remark}[Operational use]
In practice, one approximates $\theta_{\star}$ by minimizing the maximal tail-fit residual
across CHSH/KCBS/SAT families under any admissible implementation; stability of the
minimizer across implementations certifies rigidity.
\end{remark}

\subsection*{B.2 Soundness (Thm.~\ref{thm:soundness})}
Local rules are sound by classical semantics.  
For (OVL), if $\Gamma\vdash_C\psi$ with $\mathrm{Vars}(\psi)\subseteq C\cap C'$,
then compatibility ensures $\ell_{C'}$ agrees, so $\psi$ holds in $C'$.  
For (CONS), if $\Gamma,\psi\vdash_C\bot$ and $\Gamma,\neg\psi\vdash_{C'}\bot$,
then any global assignment would contradict itself on $C\cup C'$.

\subsection*{B.3 Flat-limit conservativity (Thm.~\ref{thm:flatlimit})}
If $\kappa=0$, every compatible family extends to a global valuation;
thus derivability in CBL coincides with Boolean derivability.

\subsection*{B.4 Constructive flattening for $\kappa=0$}
If $\kappa=0$, pick a context $C_0$ and assignment $\ell_{C_0}$.
Propagate values through overlaps; compatibility guarantees no contradictions.
Result: a global assignment $g:V\to\{0,1\}$.

\subsection*{B.5 Worked flattening example}
Example: $V=\{x,y,z\}$,
$C_1=\{x,y\}$,
$C_2=\{y,z\}$.
Constraints: $x\vee y$, $y\vee z$, $\neg x\vee\neg z$.
Assignment $\ell_{C_1}(x,y)=(T,F)$ propagates $y=F$ to $C_2$, requiring $z=T$.
Global assignment $(T,F,T)$ works.

\subsection*{B.6 Mermin-square UNSAT derivation}
Each row/column enforces parity: product = $+1$, except final column = $-1$.
Multiplying rows gives $+1$, multiplying columns gives $-1$.
Contradiction → no global assignment exists.

\subsection*{B.7 Worked overlap propagation (KCBS)}
Contexts $C_1=\{v_1,v_2\}$, $C_2=\{v_2,v_3\}$.
If $\Gamma=\{v_1=T\}$ then $v_2=F$ in $C_1$.
Rule (OVL) propagates $v_2=F$ to $C_2$, forcing $v_3=T$.
Continuing around the cycle eventually forces $v_1=F$, contradicting $\Gamma$.

\subsection*{B.8 Group-theoretic proof sketch of the vertex-turning law (Prop.~\ref{prop:wcbound})}
We model CBL transport on the Poincaré disk via $\mathrm{PSU}(1,1)$ Möbius isometries.
Every $g\in\mathrm{PSU}(1,1)$ admits a Cartan (KAK) decomposition
$g=K(\phi)A(\tau)K(\psi)$,
where $K(\theta)$ is a disk rotation (compact part) and $A(\tau)$ a hyperbolic boost.
Rotational factors compose additively: $K(\theta_2)K(\theta_1)=K(\theta_1+\theta_2)$.

\paragraph{Corner transport.}
At a face corner $v$ the overlap rules (OVL/CONS), De Morgan duality, and the fiber
inversion for NOT (a $\pi$ rotation) determine a local transport
(Lemma~\ref{lem:convexity}):
\[
T_v=K(\delta_v)\,A(\tau_v)\,K(\psi_v),
\]
with $\delta_v$ depending only on the logical pattern at $v$ (not on edge lengths).
For minimal cores the corner pattern is isomorphic, so $\delta_v\equiv\delta_n$ across
all corners.

\paragraph{Loop composition and central twist.}
Let $P_n$ be a minimal face with corners $v_1,\dots,v_n$.
The total transport around $\partial P_n$ is
\[
T_{\partial P_n}=K(n\delta_n)\,A(\tau_\Sigma)\,K(\Psi').
\]
The NGCC obstruction asserts that parallel transport around a minimal contextual
loop carries one unit of central fiber twist ($2\pi$ phase), modulo a vertex-local gauge $K(n\delta)$:
\[
T_{\partial P_n}=K(2\pi)\,K(n\delta)\,A(\tau_0)\,K(\Psi_0).
\]
Comparing the $K$-parts gives
$n\delta_n=n\delta+2\pi$, hence the vertex-turning law $\delta_n=\delta+2\pi/n$,
and face area $\theta_0=n\delta$, independent of $n$.

\begin{remark}
Conjugation by $K$ rotates the boost axis but preserves rapidity:
$K(\varphi)A(\tau)K(-\varphi)=\tilde A_{\varphi}(\tau)$.
Along a closed $n$-gon the boosts align pairwise by face closure,
so their net $A$-part reduces (or is absorbed into local gauge $K(n\delta)$).
All boosts are conjugate in $\mathrm{PSU}(1,1)$; the remaining abelian piece does not
affect the accumulated $K$-part that fixes the central $2\pi$ twist.
\end{remark}

\subsection*{B.9 Variational rigidity (summary)}
Combining Lemma~\ref{lem:convexity} and
Theorem~\ref{thm:variational}
proves that the CBL coupling is fixed by internal convex minimization
without external calibration, closing the logical loop between geometry and algebra.

\section{Toy benchmark generator}
\label{sec:toy-bench}

% Ensure cbl_alpha_points.csv is in the project root (case-sensitive).
% Read the CSV once (families + mixes)
\pgfplotstableread[col sep=comma]{cbl_alpha_points.csv}\CBLPointsToy

% Safer default: treat all columns as strings (prevents NaN in text columns),
% then explicitly mark only numeric columns below.
\pgfplotstableset{
  col sep=comma,
  every head row/.style={before row=\toprule,after row=\midrule},
  every last row/.style={after row=\bottomrule},
  every column/.style={string type}
}

\subsection*{Families (CHSH, KCBS, SAT 1/6, SAT 2/6)}
\begin{table}[!htbp]
  \centering
  \caption{CBL large-$L$ intercepts $\alpha_\infty$ after a single CHSH calibration
  (target $\alpha=1/137$). Face density $\rho_{\rm face}$ is faces per boundary edge.}
  \label{tab:cbl-alpha-families-toy}
  \pgfplotstabletypeset[
    % 'label' is TEXT: print as-is
    columns/label/.style={column name=Family, column type=l},
    % numeric columns: override to numeric formatting
    columns/rho_face/.style={column name={$\,\rho_{\rm face}$}, column type=r, fixed, precision=6},
    columns/alpha_inf_fit/.style={column name={$\hat{\alpha}_\infty$}, column type=r, fixed, precision=6},
    columns={label,rho_face,alpha_inf_fit},
    % keep only rows with type == family
    row predicate/.code={%
      \pgfplotstablegetelem{\pgfplotstablerow}{type}\of{\CBLPointsToy}%
      \edef\TYPE{\pgfplotsretval}%
      \ifnum\pdfstrcmp{\TYPE}{family}=0\relax\else\pgfplotstableuserowfalse\fi
    },
  ]{\CBLPointsToy}
\end{table}

\subsection{Families (CHSH, KCBS, SAT 1/6, SAT 2/6)}
For readability we present the families and mixture fits together in Tables~\ref{tab:families}–\ref{tab:mixes}.

\begin{table}[t]
\centering

% ----- families table (first) -----
\begin{minipage}{0.92\linewidth}
\centering
\captionsetup{type=table}
\captionof{table}{CBL large-$L$ intercepts $\alpha_{\infty}$ after a single CHSH calibration
(target $\alpha=1/137$). Face density $\rho_{\mathrm{face}}$ is faces per boundary edge.}
\label{tab:families}
\begin{tabular}{@{} l r r @{}}
\toprule
\textbf{Family} & $\rho_{\mathrm{face}}$ & $\hat{\alpha}_{\infty}$ \\
\midrule
chsh & 0.250000 & 0.010204 \\
kcbs & 0.200000 & 0.008272 \\
sat1 & 0.166667 & 0.007068 \\
sat2 & 0.333333 & 0.013340 \\
\bottomrule
\end{tabular}
\end{minipage}

\vspace{6pt} % <- tighten or loosen here (e.g., 4pt–8pt)

% ----- mixes table (second) -----
\begin{minipage}{0.92\linewidth}
\centering
\captionsetup{type=table}
\captionof{table}{CHSH/KCBS mixtures. A fraction $p$ is CHSH (density $1/4$) and $(1-p)$ is KCBS ($1/5$);
hence $\rho_{\mathrm{face}}(p)=p/4+(1-p)/5$. Fitted large-$L$ intercepts follow the predicted linear law.}
\label{tab:mixes}
\begin{tabular}{@{} r r r @{}}
\toprule
$p$ & $\rho_{\mathrm{face}}$ & $\hat{\alpha}_{\infty}$ \\
\midrule
0.00 & 0.200000 & 0.008322 \\
0.25 & 0.212500 & 0.008793 \\
0.50 & 0.225000 & 0.009263 \\
0.75 & 0.237500 & 0.009734 \\
1.00 & 0.250000 & 0.010204 \\
\bottomrule
\end{tabular}
\end{minipage}

\end{table}

\subsection{CHSH/KCBS Mixtures}
See Table~\ref{tab:mixes}.

\section{Teaching note}
Suggested graduate module:
\begin{itemize}[noitemsep]
\item Lecture 1: Boolean logic, SAT, CSP.
\item Lecture 2: Contextuality (Kochen--Specker, KCBS).
\item Lecture 3: CBL semantics, curved cores.
\item Lecture 4: Proof calculus, CBL-SAT.
\item Lab: implement overlaps in a CDCL solver.
\end{itemize}

\section{Outreach and education}
CBL can be taught with interactive demos.  
Undergraduates: small curved examples (3–5 variables).  
Graduates: connect logic, complexity, and physics.  

\section{Glossary of key terms}
\begin{description}[noitemsep]
\item[Context] Subset of variables $C\subseteq V$.  
\item[Local valuation] Assignment within a context.  
\item[Global valuation] Assignment on $V$ consistent with all overlaps.  
\item[Curvature] Failure of compatible locals to extend globally.  
\item[Curved core] Minimal subfamily causing $\kappa>0$.  
\item[CBL-SAT] Decision problem of satisfiability under overlaps.  
\end{description}

\section{Notation and symbols}
\label{sec:notation}

We collect recurring symbols. Scalars italic ($a,b,\theta_0$), vectors bold ($\mathbf{x}$), sets calligraphic ($\mathcal{C}$).

\begin{table}[t]
\centering
\caption{Core notation.}
\begin{tabular}{ll}
\toprule
Symbol & Meaning \\
\midrule
$\mathcal{C}$ & Set/lattice of measurement contexts \\
$G=(V,E)$ & Exclusivity graph; vertices $V$, exclusivity edges $E$ \\
$\kappa$ & Curvature/overlap parameter (contextual tension) \\
$\theta_0$ & Canonical phase/turning parameter for a curved face \\
$\rho_{\mathrm{face}}$ & Face density/weight in a folded statistic \\
$S$ & Face or witness statistic; $S^\star$ its extremum under CBL \\
$p$ & Flip probability (perturbation strength) \\
$\varepsilon$ & Admissible perturbation budget (Axiom~A2) \\
$dS/dp$ & Local sensitivity of $S$ to infinitesimal flips \\
$\widehat{S}$ & Empirical estimate of $S$ \\
$\mathsf{Fold}_{w}$ & Folding with window size $w$ \\
$\mathrm{BH}(\alpha)$ & Benjamini–Hochberg FDR control at level $\alpha$ \\
\bottomrule
\end{tabular}
\end{table}

\section{FAQ}
\begin{itemize}[noitemsep]
\item \textbf{Is CBL-SAT easier than SAT?} No, it is NP-complete. Gains are structural.  
\item \textbf{Does CBL need quantum hardware?} No. Entirely classical.  
\item \textbf{Is curvature unique?} Multiple functionals exist; cohomology rank is canonical.  
\item \textbf{What if $\kappa=0$?} Then CBL reduces to SAT.  
\item \textbf{Why not paraconsistent logic?} CBL preserves local classicality, marks obstructions structurally.  
\end{itemize}

\newpage

% Include all even if not cited directly
\nocite{KS1967,KCBS2008,AB2011,CSW2014,DIreview,
Orthomodular,Paraconsistent,CSPdichotomy,
EBIFP2025,NGCC2025}

\bibliographystyle{unsrt}
\bibliography{cbl}

\end{document}